\numberwithin{equation}{section}
\numberwithin{figure}{section}
\theoremstyle{plain}
\newtheorem*{thm*}{Theorem}
\newtheorem{thm}{Theorem}[section]
\newtheorem{lem}[thm]{Lemma}
\newtheorem*{cor}{Corollary}
\theoremstyle{definition}
\newtheorem{defn}[thm]{Definition}
\newtheorem*{defn*}{Definition}
\tikzset{
  big arrow/.style={
    decoration={markings,mark=at position 1 with {\arrow[scale=1.5,#1]{>}}},
    postaction={decorate},
    shorten >=0.4pt},
  big arrow/.default=black}
\begin{document}
\date{}
\institution{Northeastern}{\centerline{${}^{1}$Department of Mathematics, Northeastern University, Boston, MA, USA}}
\institution{HarvardPhys}{\centerline{${}^{2}$Department of Physics, Harvard University, Cambridge, MA, USA}}

\title{Characteristic numbers of elliptic fibrations with non-trivial Mordell--Weil groups}
\authors{Mboyo Esole\worksat{\Northeastern}\footnote{e-mail: {\tt j.esole@northeastern.edu}} and Monica Jinwoo Kang\worksat{\HarvardPhys}\footnote{e-mail: {\tt jkang@physics.harvard.edu}} }

\abstract{
We compute characteristic numbers of elliptically fibered fourfolds with multisections or non-trivial Mordell--Weil groups. We first consider the models of type E$_{9-d}$ with $d=1,2,3,4$ whose generic fibers are normal elliptic curves of degree $d$. We then analyze the characteristic numbers of the  $Q_7$-model, which provides a smooth model for elliptic fibrations of rank one and generalizes the E$_5$, E$_6$, and E$_7$-models. Finally, we examine the characteristic numbers of $G$-models with $G=\text{SO}(n)$ with $n=3,4,5,6$ and $\text{G}=\text{PSU}(3)$ whose  Mordell--Weil groups are respectively $\mathbb{Z}/2\mathbb{Z}$ and  $\mathbb{Z}/3 \mathbb{Z}$. In each case, we compute the Chern and Pontryagin numbers, the Euler characteristic, the holomorphic genera, the Todd-genus, the L-genus, the A-genus, and  the eight-form curvature invariant from M-theory. 
}
\maketitle

\tableofcontents

\newpage
\section{Introduction}

 In F-theory, we attach to an elliptic fibration $\varphi: Y\longrightarrow B$ a complex compact Lie group $G$ and a representation $\mathbf{R}$ of the Lie algebra $\mathfrak{g}$ of $G$. 
 Such an elliptic fibration is called a \emph{$G$-model} \cite{Euler}. 
 $G$-models are used to geometrically engineer gauge theories in compactifications of M-theory and F-theory \cite{Bershadsky:1996nh,Vafa:1996xn}.
They also play an important role in the study of superconformal gauge theories \cite{Heckman:2018jxk} and Higgs bundles \cite{Anderson:2017zfm}. 
 $G$-models are typically defined by crepant resolutions of singular Weierstrass models given by Tate's algorithm \cite{Bershadsky:1996nh, Esole.Elliptic,Vafa:1996xn}. 
The geometry of $G$-models provides a rich setting for discussions at  the boundary of string theory, algebraic geometry, arithmetic geometry, singularity theory, combinatorics, and birational geometry.

In the last few years, there was significant progress in the longstanding problem  to  describe all crepant resolutions of  a given $G$-model  given by a singular Weierstrass model and understand the geography of the flops connecting them  \cite{Andreas:2009uf,G2,Euler,SUG,SO4,EKY,EKY2,ES,ESY1,ESY2,EY,Box,Mayrhofer:2014opa}. 
  Another natural step in our understanding of these geometries is to explore further their topology and intersection rings.  
The intersection ring is not invariant under flops as we see already in the study of triple intersection numbers of  elliptically fibered threefolds connected by flops \cite{G2,F4,SUG,SO4,EKY2,ES}. 

 The most convenient invariants to compute are those preserved under crepant birational maps, as they are independent of a choice of a crepant resolution. 
Such invariants include the Euler characteristic and the Hodge numbers \cite{Batyrev.Betti,Kontsevich}. 
In particular, their values for Calabi-Yau fourfolds were subject of conjectured inspired by string dualities \cite{Blumenhagen:2009yv}. There have been major improvements in the literature that such conjectures are now becoming theorems with even more general assumptions \cite{Euler}. In particular, the Euler characteristics of $G$-models of arbitrary dimensions are systematically derived in \cite{Euler}. 

In addition, if the variety is at most of complex dimension four, its Chern numbers, and hence all its characteristic numbers that are rational combinations of Chern numbers, are also invariant under crepant birational maps \cite{Aluffi.IMRN,Chern}. 
Only elliptic fibrations of dimension four or less are considered in \cite{Chern}, since crepant birational fivefolds do not necessarily have the same Chern numbers as illustrated by Goresky and MacPherson in \cite[Example 2, p221]{GoMa}. 

Furthermore, the same techniques allow the determination of many more invariants. Some of the relevant invariants are painful to collect systematically as they depend on a choice of a crepant resolution and the number of such resolutions can increase quickly \cite{EJJN2,EJJN1,Box}. For example, the characteristic numbers of $G$-models that are smooth fourfolds obtained by crepant resolutions of Weierstrass models have been determined recently in \cite{Chern}. The key to all these developments is a new pushforward theorem that streamlines computations in the intersection ring of a blowup with a center that is a smooth complete intersection \cite{Euler}.

The present paper is a follow-up to \cite{Chern}, where we computed characteristic numbers of $G$-models of complex dimension four for $G=$ SU($n$) ($n=2,3,4,5,6,7$), Spin($7$), Spin($8$), Spin($10$),  G$_2$, F$_4$, E$_6$, E$_7$, and  E$_8$. Each group considered in \cite{Chern} are simply-connected, which implies that the Mordell--Weil group of the generic fiber of the associated elliptic fibration is trivial. We aim to extend the results of \cite{Chern} to elliptic fibrations with multisections or a non-trivial Mordell--Weil group.

Elliptic fibrations with multisections were first studied in string theory in \cite{Berglund:1998va,Bershadsky:1998vn,Klemm:1995tj} as complete intersections in weighted  projective spaces. They were generalized to complete intersections in projective bundles in \cite{AE2,EFY}. Elliptic fibrations with non-trivial Mordell--Weil groups were studied heavily in F-theory. For examples, see \cite{Aspinwall:1998xj,Baume:2017hxm,Braun:2011zm,Braun:2014qka,Braun:2013nqa,Cvetic:2015moa,Grassi:2012qw,Grimm:2010ez,Krause:2011xj,Mayrhofer:2014opa,Mayrhofer:2014laa,Mayrhofer:2014haa,Oehlmann:2016wsb} and  \cite[\S 7]{Weigand:2018rez} for a review.

In this paper, we will focus on the generic models with multisections having trivial Lie algebra $\mathfrak{g}$ with Mordell--Weil groups $\mathbb{Z}/2\mathbb{Z}$, $\mathbb{Z}/3\mathbb{Z}$, and $\mathbb{Z}$. Since the first model has a gauge group SO($3$), we also include in our analysis the $G$-models with $G$=SO($n$) for $n=3,4,5,6$; they all  have the Mordell--Weil group $\mathbb{Z}/2\mathbb{Z}$ and are related to each other by
base changes \cite{SO,SO4}. We also examine the $G$-model with $G$=PSU($3$), which is the generic model with a Mordell--Weil group $\mathbb{Z}/3\mathbb{Z}$ \cite{Aspinwall:1998xj}. Finally, we consider the case of elliptic fibration of rank one. A generic model of an elliptic fibration of rank one was introduced in F-theory by Morrison and Park \cite{Morrison:2012ei}; a smooth model in the birational class of the Morrison-Park model is given by the Q$_7$-model introduced in \cite{EKY}, which generalizes a model introduced by Cacciatori, Cattaneo and Geemen in \cite{Cacciatori:2011nk} sharing the same Jacobian with the Morrison-Park model.

In particular, we determine the characteristic numbers of elliptic fourfolds $\varphi: Y\longrightarrow B$ of the following three types of elliptic fibrations:
 \begin{enumerate}
 \item The generic fiber of $Y$ is  a genus-one normal curve of degree $d$ for $d=1,2,3,4,5$. Such elliptic fibrations are called models of  type E$_{9-d}$.
They are defined in \cite{AE2,EFY}, where they are used to provide strong coupling regimes of several systems of intersecting branes and orientifolds preserving a fraction of supersymmetry and satisfying the tadpole constraints \cite{CDE,AE1,AE2}.  

 \item $Y$ is an elliptic fibration of rank one. In this case, we use the smooth model Q$_7$($\mathscr{L},\mathscr{M})$ introduced in \cite{EKY}, and a generalization of the model of \cite{Cacciatori:2011nk}, birational to the model of Morrison-Park \cite{Morrison:2012ei}.

 \item $Y$ is a $G$-model with $G=\text{SO}(n)$ with $n=3,4,5,$ or $6$ or with G=PSU($3$). The SO($3$)-model is the generic elliptic fibration with a Mordell--Weil group $\mathbb{Z}/2\mathbb{Z}$  \cite{Aspinwall:1998xj,SO}. 
The SO($3$)-model is defined by $b_6=0$ and has a fiber of type I$_2$ over the generic point of $b_4$. The SO($5$)-model and SO($6$)-model are both derived from the SO($3$)-model by the base change $b_4\to t^2$, which replaces the fiber I$_2$ by an I$_4$ \cite{SO}. The model is an SO($5$)  (resp. SO($6$) )  when the generic fiber over $V(t$) is of type I$_4^{\text{ns}}$ (resp.  I$_4^{\text{s}}$) \cite{SO}. 
The SO($4$)-model is a collision of type $A_1+A_1$ with a Mordell--Weil group $\mathbb{Z}/2\mathbb{Z}$ \cite{EKY2}. Such a collision is not uniquely defined since there are several Kodaira fibers whose dual graph is $\widetilde{\text{A}}_1$. The collisions of kodaira fibers describing the SO($4$)-model that have crepant resolutions are I$_2^{ns}$+I$_2^{ns}$, I$_2^{ns}$+I$_2^s$, I$_2^s$+I$_2^s$, III+I$_2^{ns}$, and III+III \cite{SO4}.
The PSU($3$)-model has a fiber of type I$_3^s$ with a Mordell--Weil group $\mathbb{Z}/3\mathbb{Z}$ \cite{Aspinwall:1998xj}. 
\end{enumerate}

\begin{table}[htb]
\begin{center}
\begin{tabular}{ |c|c |c  | }
\hline
Type & Zero scheme of a section of & Ambient space \\
\hline
Q$_7$ & $\mathscr{O}(3)\otimes \pi^* \mathscr{L}^{\otimes 2}\otimes \pi^* \mathscr{M}$  &  $\mathbb{P}(\mathscr{O}_B\oplus\mathscr{M}\oplus \mathscr{L})$\\
\hline
E$_5=D_5$ & $\mathscr{O}(2)\otimes \pi^* \mathscr{L}^{\otimes 2}$,  $\mathscr{O}(2)\otimes \pi^* \mathscr{L}^{\otimes 2}$  &  $\mathbb{P}(\mathscr{O}_B\oplus\mathscr{L}\oplus \mathscr{L}\oplus\mathscr{L})$\\
\hline
E$_6$& $\mathscr{O}(3)\otimes \pi^* \mathscr{L}^{\otimes 3}$  & $\mathbb{P}(\mathscr{O}_B\oplus\mathscr{L}\oplus \mathscr{L})$\\
\hline
  E$_7$&   $\mathscr{O}(4)\otimes \pi^* \mathscr{L}^{\otimes 4}$  &  $\mathbb{P}_{1,1,2}(\mathscr{O}_B\oplus\mathscr{L}\oplus \mathscr{L}^{\otimes 2})$ \\
 E$^{'}_7$& $\mathscr{O}(3)\otimes \pi^* \mathscr{L}^{\otimes 4}$  & $\mathbb{P}(\mathscr{O}_B\oplus\mathscr{L}\oplus \mathscr{L}^{\otimes 2})$\\
\hline
E$_8$&  $\mathscr{O}(3)\otimes \pi^* \mathscr{L}^{\otimes 6}$  & $\mathbb{P}(\mathscr{O}_B\oplus\mathscr{L}^{\otimes 2}\oplus \mathscr{L}^{\otimes 3})$\\
 \hline

\end{tabular}
\end{center}
\caption{The E$_8$-model is the usual Weierstrass model defined by Deligne and Tate. 
The E$_6$ and E$_7$-models are defined in \cite{AE1} while the D$_5$-model is defined in \cite{EFY}.  The E$^{'}_7$ and Q$_7$ model are respectively introduced in \cite{Cacciatori:2011nk} and \cite{EKY}. The Q$_7$-model specializes to E$_6$ and E$_7^{'}$ when $\mathscr{M}$ is $\mathscr{L}$ and $\mathscr{L}^{\otimes 2}$ respectively. 
\label{Table.Models}}
\end{table}

\subsection{Characteristic numbers considered}
Following \cite{Chern}, for each of the elliptic fibrations, we compute  the following six rational Chern and Pontryagin numbers:
\begin{enumerate}
\item The Chern numbers 
\begin{equation}
\text{$\int_Y c_1(TY)^4$, $\int_Y c_1(TY)^2 c_2 (TY)$, $\int_Y c_1(TY) c_3 (TY)$, $\int_Yc_2^2(TY)$,  and $\int_Y c_4 (TY).$}
\end{equation}
\item The holomorphic genera  $\chi_p(Y) = \sum_{q=0}^n (-1)^q h^{p, q}(Y)$ \cite{Klemm:1996ts}:
  \begin{equation}
  \begin{aligned}
  \chi_0(Y)&=\int_Y \mathrm{Td}(TY)=\frac{1}{720}\int_Y (-c_4 + c_1 c_3 +3 c_2^2 + 4 c_1^2 c_2 -c_1^4),\\
  \chi_1(Y) & = \frac{1}{180}\int_Y (-31c_4 -14 c_1 c_3 +3 c_2^2 + 4 c_1^2 c_2 -c_1^4),\\
  \chi_2(Y) & = \frac{1}{120}\int_Y (79c_4 -19 c_1 c_3 +3 c_2^2 + 4 c_1^2 c_2 -c_1^4).
  \end{aligned}
  \end{equation}

\item The  Pontryagin numbers $\int_Y p_2(TY)$ and $\int_Y p_1^2(TY)$, where the Pontryagin classes $p_1(TY)$ and $p_2(TY)$ are defined as
 \begin{equation}
 \begin{aligned}
 p_1(TY) &=c_1^2(TY) -2c_2(TY), \\
  p_2(TY) &= c_2^2(TY)-2 c_1(TY) c_3(TY)+2c_4(TY).
 \end{aligned}
 \end{equation}
\item The Hirzebruch signature of a fourfold, 
 \begin{equation}
\sigma(Y)=\frac{1}{45}\int_Y \Big(7p_2(TY) -p_1^2(TY)\Big).
\end{equation}
The signature is the degree of the Hirzebruch $L$-genus. 
\item The $\hat{\text{A}}$-genus of a fourfold,
\begin{equation}
\begin{aligned}
\int_Y \hat{\text{A}}_2(TY)&=\frac{1}{5760}\int_Y\Big(7p_1^2(TY)-4p_2(TY)\Big).
\end{aligned}
 \end{equation}
By the Atiyah--Singer theorem, if the fourfold $Y$ is a spin manifold,  the degree of  $ \hat{\text{A}}_2$ gives the index of the Dirac operator on  $Y$. 
\item 
 We also compute the following invariant that plays an important role in many questions of anomaly cancellations in type IIA, M, and F-theory \cite{Sethi:1996es,Vafa:1995fj}:
\begin{equation}
X_8(Y)=\frac{1}{192 }\int_Y \Big( p_1^2(TY) -4p_2(TY)\Big).
\end{equation}
\end{enumerate}

Even though Chern numbers other than the Euler characteristic are not topological invariants, some are invariant under crepant birational maps as proven by Aluffi.

\begin{thm}[Aluffi, { \cite[page 3368]{Aluffi.IMRN}}]\label{Thm:Al}
For two nonsingular $n$-dimensional complete varieties $X$ and $Y$ connected by a crepant birational map,
$$
\int_X c_1(TX)^i c_{n-i}(TX)=\int_Y c_1(TY)^i c_{n-i}(TY), \quad i=0, 1, \dots, n.
$$
\end{thm}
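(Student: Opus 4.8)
The plan is to read the crepant birational map as a $K$-equivalence and to transport both Chern numbers onto a single smooth model dominating the two sides. Concretely, I would first choose a smooth complete variety $W$ together with proper birational morphisms $p\colon W\to X$ and $q\colon W\to Y$ resolving the given map (possible by Hironaka, since $X$ and $Y$ are nonsingular), and after possibly replacing $W$ by a further blow-up I may assume that $p$ and $q$ are each compositions of blow-ups along smooth centers. Writing the relative canonical divisors as $K_W=p^*K_X+\Delta_p=q^*K_Y+\Delta_q$, with $\Delta_p$ being $p$-exceptional and $\Delta_q$ being $q$-exceptional, the crepancy hypothesis is exactly $p^*K_X=q^*K_Y$, which forces $\Delta_p=\Delta_q=:\Delta$. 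Passing to $c_1=-K$, this supplies the one clean structural input I would rely on throughout:
\[
\gamma:=p^*c_1(TX)=q^*c_1(TY).
\]
Since $p$ and $q$ have degree one, the projection formula gives $\int_X c_1(TX)^i c_{n-i}(TX)=\int_W \gamma^i\, p^*c_{n-i}(TX)$ and likewise for $Y$, so the theorem becomes the single family of vanishings
\[
\int_W \gamma^i\bigl(p^*c_{n-i}(TX)-q^*c_{n-i}(TY)\bigr)=0,\qquad i=0,\dots,n.
\]

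For the top two values this is automatic from crepancy alone: when $i=n$ the factor $c_{n-i}=c_0=1$ pulls back identically, and when $i=n-1$ the factor $c_1$ pulls back to $\gamma$ from both sides, so in either case the bracket vanishes before integration. The first genuinely nontrivial case is $i=n-2$, where $p^*c_2(TX)$ and $q^*c_2(TY)$ need no longer agree, and this is where I would feed in the blow-up formula for Chern classes. For a single blow-up $\pi\colon \tilde X\to X$ along a smooth $Z$ of codimension $d$ with exceptional divisor $j\colon E=\mathbb{P}(N_{Z/X})\hookrightarrow\tilde X$ one has $c_1(T\tilde X)=\pi^*c_1(TX)-(d-1)E$ together with $c(T\tilde X)=\pi^*c(TX)+j_*(\text{an explicit class in }c(N_{Z/X})\text{ and }\zeta)$. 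Iterating these along the factorizations of $p$ and $q$, I would write each difference $p^*c_{n-i}(TX)-c_{n-i}(TW)$ and its $q$-analogue as a cycle supported on the respective exceptional locus, and then push the product with $\gamma^i$ down to the centers.

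The hard part will be the actual cancellation of these exceptional contributions between the two sides, and this is precisely where I expect the special shape of $c_1^{\,i}c_{n-i}$ — a power of $c_1$ times at most one genuinely higher Chern class — to be indispensable. The single non-$c_1$ factor is what allows the factor $\gamma^i=(\text{pullback of }c_1)^i$ to be extracted under $p_*$ and $q_*$, reducing each one-sided correction to the pushforward of a single Chern-type class to its center; for a truly quadratic monomial such as $c_2^2$ no such reduction is available, which is consistent with $c_2^2$ \emph{not} being covered by the statement. I anticipate that each one-sided correction is individually nonzero — already for a codimension-two center it is a nonzero intersection number on $Z$ — so the crux is to show that the $p$-side and $q$-side corrections match, the only link between them being the crepancy identity $\Delta_p=\Delta_q$ promoted from a comparison of divisor classes to a comparison of the normal-bundle data entering the blow-up formula. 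Establishing that matching — equivalently, that the exceptional classes, integrated against powers of the common class $\gamma$, depend only on invariants shared by $X$ and $Y$ through $\Delta$ — is the step I expect to demand the most care, and is presumably where Aluffi's pushforward formalism for Chern classes of blow-ups does the real work.
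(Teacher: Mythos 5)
First, a point of context: the paper does not prove this statement at all --- it is imported verbatim from Aluffi \cite{Aluffi.IMRN}, so there is no internal proof to compare yours against; your proposal has to be judged against what is actually known to work. Judged that way, it has a genuine gap. After the (correct) reduction to a common resolution $W$ with $\gamma=p^*c_1(TX)=q^*c_1(TY)$, you only complete the cases $i=n$ and $i=n-1$, which are exactly the cases where the statement is trivial; for every $i\le n-2$ you defer to a hoped-for ``matching of exceptional contributions'' that you explicitly do not carry out. That deferred step \emph{is} the theorem. To see how far from routine it is, note that the case $i=0$ is the equality of Euler characteristics of $K$-equivalent varieties, i.e.\ a consequence of Batyrev's theorem \cite{Batyrev.Betti,Kontsevich}, whose proofs go through $p$-adic or motivic integration; no direct two-sided cancellation of blow-up corrections of the kind you envision is known to produce it. Saying that the two sides must match ``because the only link between them is $\Delta_p=\Delta_q$'' is a restatement of the conclusion, not an argument: the blow-up corrections on each side involve the centers and their normal bundles, and the crepancy identity by itself does not obviously control that data.

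There is also a flaw in the setup itself. You assume that, after further blow-ups, \emph{both} legs $p\colon W\to X$ and $q\colon W\to Y$ of a single roof can be taken to be compositions of blow-ups along smooth centers. That is the strong factorization conjecture, which is open. Hironaka's principalization lets you arrange this for one leg, but making the second leg a composition of smooth blow-ups by blowing up $W$ further changes nothing about the first leg's status only in one direction --- iterating the trick ping-pongs forever and never terminates. The tool that actually exists is the weak factorization theorem (Abramovich--Karu--Matsuki--W\l odarczyk), which gives a zigzag of smooth blow-ups and blow-downs rather than a roof, and this is the scaffolding Aluffi's real proof is built on: he combines weak factorization with his pushforward formula for Chern classes of blow-ups (the same formula quoted in this paper as Theorem \ref{Thm:AluffiCBU} and Lemma \ref{lem:symhom}), tracking along the zigzag a ``twisted'' Chern-number invariant that depends only on the pullback of the canonical class, so that $K$-equivalence forces agreement at the two ends. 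So your instinct that Aluffi's blow-up formalism ``does the real work'' is right, but the architecture in which it can be made to work is genuinely different from the one you propose, and the cancellation you are counting on is not available in your setup.
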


The following theorem asserts that the Chern numbers of fourfolds are  $K$-equivalence  invariants.  The proof follows from  Theorem \ref{Thm:Al} of Aluffi and the birational invariance of the Todd-genus. 
\begin{thm}[Esole--Kang, {\cite{Chern}}]\label{Thm:TheInvariance}
The Chern and Pontryagin numbers of an algebraic variety of complex dimension four are $K$-equivalence invariants.
\end{thm}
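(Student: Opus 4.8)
The plan is to reduce the statement to the five Chern numbers of a fourfold, namely the degree-four monomials $\int_Y c_1^4$, $\int_Y c_1^2 c_2$, $\int_Y c_1 c_3$, $\int_Y c_2^2$, and $\int_Y c_4$, and to show that each is a $K$-equivalence invariant; every other characteristic number in our list is a fixed rational combination of these, so the general statement follows. Since two smooth complete varieties are $K$-equivalent precisely when they are connected by a crepant birational map, Theorem \ref{Thm:Al} applies directly. First I would set $n=4$ in Aluffi's theorem and let $i$ range over $0,1,2,3,4$: this yields the invariance of $\int_Y c_4$ (from $i=0$), of $\int_Y c_1 c_3$ (from $i=1$), of $\int_Y c_1^2 c_2$ (from $i=2$), and of $\int_Y c_1^4$ (from $i=3$ or $i=4$, which coincide since $c_0=1$). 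This already accounts for four of the five Chern numbers.

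The one Chern number not of the form $c_1^i c_{4-i}$ is $\int_Y c_2^2$, and recovering its invariance is the crux of the argument. Here I would invoke the birational invariance of the holomorphic Euler characteristic $\chi_0(Y)=\chi(\mathcal{O}_Y)=\sum_q (-1)^q h^{0,q}(Y)$: for smooth projective varieties the geometric genera $h^{0,q}$ are birational invariants, so $\chi_0$ is a birational invariant and in particular a $K$-equivalence invariant, as $K$-equivalent varieties are birational. Using the Todd-genus expression for $\chi_0(Y)$ quoted in the excerpt, I would solve the resulting linear relation for $\int_Y c_2^2$:
\begin{equation}
3\int_Y c_2^2 = 720\,\chi_0(Y) + \int_Y c_4 - \int_Y c_1 c_3 - 4\int_Y c_1^2 c_2 + \int_Y c_1^4 .
\end{equation}
Every term on the right-hand side is already known to be $K$-invariant — the four Aluffi monomials together with $\chi_0$ — so $\int_Y c_2^2$ is a $K$-equivalence invariant as well. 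This completes the argument for all five Chern numbers.

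It then remains to handle the two Pontryagin numbers, which is purely formal. Expanding $p_1=c_1^2-2c_2$ and $p_2=c_2^2-2c_1 c_3+2c_4$ gives
\begin{equation}
\int_Y p_1^2 = \int_Y c_1^4 - 4\int_Y c_1^2 c_2 + 4\int_Y c_2^2, \qquad \int_Y p_2 = \int_Y c_2^2 - 2\int_Y c_1 c_3 + 2\int_Y c_4,
\end{equation}
each of which is a rational combination of the five invariant Chern numbers and is therefore itself $K$-invariant. Because the signature, the $\hat{\mathrm{A}}$-genus, the $L$-genus, and $X_8$ are fixed rational combinations of $\int_Y p_1^2$ and $\int_Y p_2$, they inherit the invariance as well.

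I expect the main obstacle to be conceptual rather than computational: the only Chern number that escapes Aluffi's theorem is $\int_Y c_2^2$, so the whole proof hinges on producing one further $K$-invariant linear relation among the five Chern numbers that is independent of the four supplied by Theorem \ref{Thm:Al}. The Todd genus furnishes exactly such a relation, and the decisive point is that the coefficient of $c_2^2$ in $\chi_0$ is nonzero (it equals $3/720$), so the relation can genuinely be solved for $\int_Y c_2^2$. The only care required is in justifying that $\chi_0$ is a bona fide $K$-equivalence invariant, which reduces to the classical birational invariance of $h^{0,q}$ for smooth projective varieties.
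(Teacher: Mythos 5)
Your proposal is correct and follows essentially the same route as the paper, which states that the proof ``follows from Theorem~\ref{Thm:Al} of Aluffi and the birational invariance of the Todd-genus'': Aluffi's theorem handles the four monomials $\int_Y c_1^i c_{4-i}$, the Todd-genus relation is solved for $\int_Y c_2^2$ (possible since its coefficient $3/720$ is nonzero), and the Pontryagin numbers follow as rational combinations. No gaps; your algebraic relations match the paper's formulas exactly.
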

The Chern number $\int_Yc_1(TY)^2 c_2(TY)$, the $\hat{A}$-genus, and the Todd-genus (the holomorphic Euler characteristic) are invariants of the choice of $G$. 
They can all be expressed as invariants of the divisor $W$ defined by the vanishing locus of a smooth section of $\mathscr{L}$. 

By expressing $\chi_0$ and $\hat{A}$ in terms of Chern numbers and using the identity $\int_Y c_1^4=0$, which holds for 
any elliptic fibration related to a Weierstrass model by a crepant birational map  (see Theorem \ref{thm.1.4}), we get the following expressions of $c_1^2 c_2$ and $c_1 c_3$:
\begin{align}
\int_Y c_1^2c_2=96(\chi_0(Y)-\hat{A}(Y)),\quad 
\int_Y c_1 c_3 = 384\hat{A}(Y)+ 336\chi_0(Y)+\chi(Y)- 3 \int_Y c_2^2.
\end{align}
For an elliptic fibration that is a crepant birational map away from a Weierstrass model, this shows that $\int_Y c_1^2 c_2$ gives the same value as a smooth Weierstrass model with the same  fundamental line bundle $\mathscr{L}$. 
It is therefore enough to compute only $\int_Y c_2^2$ and the Euler characteristic $\chi(Y)=\int_Y c_4$ to know all the Chern numbers.

\subsection{The Calabi-Yau fourfold case}\label{sec:CY4}
In the case of Calabi-Yau fourfolds, knowing the Euler characteristic is enough to also compute other invariants such as the Chern number $c_2(TY)^2$ as a function of the Euler characteristic. 
\begin{thm}\label{Thm2.CY4}
 The Chern numbers and Pontryagin numbers of a  Calabi-Yau fourfold depend  only on its Euler characteristic. 
\end{thm}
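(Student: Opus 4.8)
The plan is to exploit the single defining feature of a Calabi--Yau fourfold that matters here: the triviality of the canonical class, which in terms of characteristic classes reads $c_1(TY)=0$. A fourfold has exactly five Chern numbers, namely the integrals of the degree-four monomials
\[
c_1^4,\qquad c_1^2 c_2,\qquad c_1 c_3,\qquad c_2^2,\qquad c_4 .
\]
Once $c_1=0$, the first three vanish identically, leaving only $\int_Y c_2^2$ and the Euler characteristic $\chi(Y)=\int_Y c_4$ as candidates for nonzero Chern numbers. The whole theorem therefore reduces to the single assertion that $\int_Y c_2^2$ is a function of $\chi(Y)$.

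To establish that, I would feed $c_1=0$ into the Todd-genus formula recorded in the introduction, which collapses to
\[
\chi_0(Y)=\frac{1}{720}\int_Y\bigl(3\,c_2^2-c_4\bigr).
\]
For a Calabi--Yau fourfold in the strict sense the Hodge numbers are $h^{0,0}=h^{0,4}=1$ and $h^{0,1}=h^{0,2}=h^{0,3}=0$, so $\chi_0(Y)=\sum_q(-1)^q h^{0,q}(Y)=2$. Substituting and solving gives
\[
\int_Y c_2^2=\frac{1}{3}\,\chi(Y)+480,
\]
so $\int_Y c_2^2$ is indeed an explicit affine function of the Euler characteristic.

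It then remains to run the same reduction through the Pontryagin numbers. With $c_1=0$ the Pontryagin classes simplify to $p_1=-2c_2$ and $p_2=c_2^2+2c_4$, so that $\int_Y p_1^2=4\int_Y c_2^2$ and $\int_Y p_2=\int_Y c_2^2+2\chi(Y)$. Both are determined by $\chi(Y)$ via the previous step, and hence so is every invariant built as a fixed rational combination of them---the signature, the $\hat{\mathrm{A}}$-genus, and $X_8$. Collecting these expressions completes the proof.

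The only step that is not pure bookkeeping is the input $\chi_0(Y)=2$, which is precisely where the strict Calabi--Yau hypothesis enters: one needs the vanishing of the intermediate $h^{0,q}$ (equivalently simple connectedness together with holonomy exactly $\mathrm{SU}(4)$), supplied by Hodge symmetry and the triviality of the canonical bundle. That this hypothesis is genuinely necessary is visible from an abelian fourfold, where all Chern numbers vanish yet the formula above would force $\int_Y c_2^2=480$; there $\chi_0=0$ rather than $2$. Fixing this convention and justifying $\chi_0=2$ is thus the one place to be careful, and I expect it to be the main---indeed the only---nontrivial point.
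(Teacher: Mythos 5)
Your proposal is correct and follows essentially the same route as the paper: the paper's justification is exactly the list of formulas it displays after the theorem (from Klemm--Lian--Roan--Yau and Sethi--Vafa--Witten), which are obtained precisely by setting $c_1(TY)=0$, using $\chi_0=2$ in the Todd-genus formula to solve for $\int_Y c_2^2$ in terms of $\chi(Y)$, and then expressing $p_1^2$ and $p_2$ through $c_2^2$ and $c_4$. Your closing remark that the strict Calabi--Yau hypothesis (i.e.\ $\chi_0=2$, not merely $c_1=0$) is genuinely needed is a correct and worthwhile observation that the paper leaves implicit.
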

In the Calabi-Yau fourfold case, all Chern and Pontryagin numbers  can be read from the computation of the Euler characteristic of $G$-models in \cite{Euler} as  we have \cite{Klemm:1996ts,Sethi:1996es}
  \begin{equation}
  \begin{aligned}
\int_Y  c_2^2(TY) &=480+\frac{1}{3}\chi(Y),\quad
  \sigma =32+\frac{1}{3}\chi(Y),\\
  \chi_0&=2 ,\quad
  \chi_1  =8-\frac{1}{6} \chi(Y),\quad
  \chi_2  =12+ \frac{2}{3}\chi(Y),\\
  X_8&=-\frac{1}{24}\chi(Y),\quad
 \frac{1}{5760} \int_Y \hat{\text{A}}_2 = 2.
   \end{aligned}
  \end{equation}
For $G$-models, the Euler characteristic of a Calabi-Yau fourfold is given in Table 10 of \cite{Euler}, which we reproduce here for completeness. We notice that the same table can be obtained from the Euler characteristic $c_4(TY)$ after putting $L=c_1$, which is the condition that ensures the triviality of the canonical class of $Y$.

 The rest of the paper is organized as follows. In Section \ref{sec:modelsconsidered}, we introduce all the models considered in this paper. First we explain E$_8$, E$_7$, E$_6$, D$_5$, and Q$_7$-models in Section \ref{sec:genusone} and \ref{sec:Enmodels}. We then explain the $G$-models considered in this paper with $G=\text{SO}(n)$ for $n=3,4,5,6$ in Section \ref{sec:SO(n)} and with $G=\text{PSU}(3)$ in Section \ref{sec:PSU3}. We summarize necessary pushforward theorems used and introduce theorems for the models considered in this paper in Section \ref{sec:push}. Finally, we present our  results in Section \ref{sec:results}.

\section{
Descriptions of the elliptic fibrations considered
}
 \label{sec:modelsconsidered}
 
In this section, we introduce the elliptic fibrations considered in this paper. Namely, the elliptic fibrations of type E$_8$, E$_7$, E$_6$ \cite{AE1}, D$_5$ \cite{EFY}, and Q$_7$ \cite{EKY}, the SO($n$)-models for $n=3,4,5,6$ \cite{SO,SO4}, and the PSU($3$)-model \cite{Aspinwall:1998xj}.

\subsection{Genus-one normal curves of degree $d$} \label{sec:genusone}
The Riemann-Roch theorem for curves famously implies that a genus-one curve with a rational point can be expressed as the zero scheme of a Weierstrass equation \cite{MumfordSuominen}. 
Using some cohomology and base change techniques, the result can also be extended to fibrations of genus-one curves with a rational section \cite{Formulaire, MumfordSuominen}. 
A divisor of degree $d\geq 3$ defines an embedding of the genus-one curve into $\mathbb{P}^{d-1}$, such a curve is a called a {\em genus-one normal curve of degree $d$}.  For a review on genus-one normal curves, see \cite{Hulek} and \cite[\S3]{Fisher}. 
Some of the historical most famous families of elliptic curves are genus-one normal curves of degree $1$, $2$, $3,$ and $4$. These are precisely the genus-one normal curves that are hypersurfaces or complete intersections in a (weighted) projective space \cite{Hulek}. 
A rational point is a divisor of degree one on the curve. By the Riemann-Roch theorem, a divisor of degree one on a genus-one curve yields a Weierstrass equation \cite{MumfordSuominen}.
A divisor of degree two gives a quartic curve that can be  embedded as a quartic in a weighted projective space $\mathbb{P}_{1,1,2}$. 
Since the weighted projective space $\mathbb{P}_{1,1,2}$ is mapped to a singular quadric surface by a map of degree two, we can also think of quartic in $\mathbb{P}_{1,1,2}$ as a particular case of a complete intersection of quadric surfaces in $\mathbb{P}^3$.
 A divisor of degree three gives a projective cubic curve in $\mathbb{P}^2$. 
A divisor of degree four gives the complete intersection of two quadrics surfaces in  $\mathbb{P}^3$.  

In the string theory literature, an elliptic fibration whose generic fiber is a genus-one normal curve of degree $d=1,2,3,4$ is called a model of type  E$_{9-d}$ (we recall that E$_5$ is isomorphic to D$_5$) \cite{Klemm:1996ts,AE2,EFY}. 
They are historically named after some del Pezzo surfaces in which they can be embedded as hyperplane divisors.

A del Pezzo surface is a smooth surface with an ample anticanonical divisor \cite{DelPezzo}.  It follows that the anticanonical model of a del Pezzo surface is isomorphic to the del Pezzo surface. 
The degree of a del Pezzo surface is the square of its canonical class.
In string theory,  a del Pezzo surface of degree $d$ is usually called a dP$_{9-d}$. 
If $X$ is a del Pezzo surface of degree $d\geq 3$, then  its anticanonical model is a smooth surface of degree $d$ in $\mathbb{P}^d$. 
If $X$ is a del Pezzo surface of degree one, then its anticanonical model is  a hypersurface of degree six in the weighted projective space $\mathbb{P}_{1,1,2,3}$. 
If $X$ is a del Pezzo surface of degree two, then its anticanonical model is a hypersurface of degree four in $\mathbb{P}_{1,1,2,3}$. 
A del Pezzo surface of degree eight is isomorphic to a quadratic Veronese embedding of a quadric in $\mathbb{P}^3$.  
A  del Pezzo surface of degree $3\leq d\leq 8$ is isomorphic to  $\mathbb{P}^2$ blown up at $(9-d)$ points in general position.
% (no three points lie on a line, and no six points lie on a conic).  

In the late 1960s, Manin discovered that a del Pezzo surface $X$ of degree $d\leq 6$ is associated with a root system E$_{9-d}$   \cite{Manin}: the automorphism group of the incidence
graph of the exceptional curves on $X$ is the Weyl group W(E$_{9-d}$) of the Lie algebra of type E$_{9-d}$. By Bertini's theorem, a general hyperplane section on a del Pezzo surface of degree $d$ defines a smooth (normal) elliptic curve of degree $d$.  
Such an elliptic curve is said to be of type E$_{9-d}$ when $d=1,2,3,4,5$ \cite{Klemm:1996ts,AE2,EFY}.

\begin{table}[htb]
\begin{center}
\begin{tabular}{ |   c |  c   |c|}
\hline
  Weierstrass model ($E_8$-type)   & $y^2=x^3+ f x  + g $ %&  MW : trivial
  \\
Legendre family       & $y^2= x(x-1)(x-\lambda )$ %& MW: $(\mathbb{Z}_2)^2$
 \\
 Hesse family  (E$_6$-type)       & $x^3+y^3+ 3\mu x y +1 =0$ %& 
 \\
  Jacobi's quartic (E$_7$-type)   & $y^2=x^4 + 2\kappa x^2 + 1$ %& 
  \\
  Jacobi's  intersection (D$_5$-type)& $x^2+y^2-z^2=k x^2+ w^2-z^2=0$ %&
   \\
   \hline
  Newton's  cubic hyperbola ($Q_7$ type)& 
  $yx^2 + Ax = By^3 + Cy^2 + Dy + E$ %& MW: $\mathbb{Z}$
  \\
  \hline 
\end{tabular}
\end{center}
\caption{Some classic families of elliptic curves. The Newton's cubic hyperbola is also than most of these curves but has not been used until recently in \cite{EKY}. The Weierstrass model was also in Newton's list of cubics. }
\end{table}

\subsection{E$_8$, E$_7$, E$_6$, D$_5$, and Q$_7$-models} \label{sec:Enmodels}
The model of type E$_8$ is the usual Weierstrass model \cite{Formulaire, MumfordSuominen}.  
 The E$_8$-model is often described in string theory by a curve of degree six in the weighted projective space $\mathbb{P}_{1,2,3}$, specially in toric constructions. 
The Weierstrass equation appears automatically in this form, but it has a major drawback as the ambient space has singularities which require much care, especially when doing intersection theory. 
 In turn, we prefer using the projective bundle $\mathbb{P}[\mathscr{O}_B\oplus\mathscr{L}^{\otimes 2}\oplus\mathscr{L}^{\otimes 3}]\to B$ \cite{Euler}. 

Models of type  E$_7$, E$_6$, and E$_5=\text{D}_5$ were first realized as hypersurfaces in weighted projective spaces \cite{Berglund:1998va,Bershadsky:1998vn,Klemm:1995tj,Klemm:1996ts}. 
They were generalized to arbitrary bases by using projective bundles in \cite{AE2, EFY}. In  \cite{EKY}, we introduced a new family modeled after  Newton's cubic hyperbola generalizing the family introduced in \cite{Cacciatori:2011nk}. 
Another generalization would be to consider elliptic surfaces whose generic fiber are defined as hypersurfaces in toric surfaces \cite{Grassi:2012qw,Braun:2013nqa,Klevers:2014bqa}.  
There are in total sixteen two dimensional reflexive polyhedra. By the adjunction theorem, each of them can define a genus one curve as an anticanonical divisor.

For example, the generic equation of an elliptic fibration with Mordell--Weil group $\mathbb{Z}/2\mathbb{Z}$ has a singularity that has a crepant resolution defined by blowing up the point $x=y=0$. 
This turns the $\mathbb{P}^2$-bundle  $X_0=\mathbb{P}(\mathscr{O}_B\oplus\mathscr{L}^{\otimes 2}\oplus \mathscr{L}^{\otimes 3})$ into a fibration of dP$_1$ surfaces. 
Interestingly, the discriminant locus has a fiber of type I$_2$, and therefore the gauge group is SO($3$) as we have to take the quotient of SU($2$) by a central  $\mathbb{Z}/2\mathbb{Z}$.

\begin{enumerate}
\item The Weierstrass model is a zero scheme of  a section of $\mathscr{O}(3)\otimes \pi^* \mathscr{L}^{\otimes 6}$  in $\mathbb{P}(\mathscr{O}_B\oplus\mathscr{L}^{\otimes 2}\oplus \mathscr{L}^{\otimes 3})$.
This is the usual Weierstrass model given for example by Tate's form. 

\item The elliptic fibration of type E$_7$ is  a zero scheme of  a section of $\mathscr{O}(4)\otimes \pi^* \mathscr{L}^{\otimes 4}$  in $\mathbb{P}_{1,1,2}(\mathscr{O}_B\oplus\mathscr{L}\oplus \mathscr{L}^{\otimes 2})$ \cite{AE2}. 
Its generic fiber corresponds to a fiber of the Jacobi quartic.
The elliptic fibration of type E$_7$ can be written as a zero scheme of  a section of $\mathscr{O}(3)\otimes \pi^* \mathscr{L}^{\otimes 4}$  in $\mathbb{P}(\mathscr{O}_B\oplus\mathscr{L}\oplus \mathscr{L}^{\otimes 2})$ as proven in \cite{Cacciatori:2011nk}. 

\item The elliptic fibration of type E$_6$ is  a zero scheme of  a section of $\mathscr{O}(3)\otimes \pi^* \mathscr{L}^{\otimes 3}$  in $\mathbb{P}(\mathscr{O}_B\oplus\mathscr{L}\oplus \mathscr{L})$ \cite{AE2}. 
Its generic fiber has a triple section. 

\item The elliptic fibration of type D$_5$ is a zero scheme of  two  sections of   $\mathscr{O}(2)\otimes \pi^* \mathscr{L}^{\otimes 2}$  in $\mathbb{P}(\mathscr{O}_B\oplus\mathscr{L}\oplus \mathscr{L}\oplus\mathscr{L})$ \cite{EFY}.
Its generic fiber corresponds to a Jacobi complete intersection (Intersection of two quadrics in $\mathbb{P}^3$.
\item The elliptic fibration of type Q$_7$($\mathscr{L},\mathscr{M}$) is a zero scheme of  a section of  $\mathscr{O}(3)\otimes \pi^* \mathscr{L}^{\otimes 2}\otimes \pi^* \mathscr{M}$  in $\mathbb{P}(\mathscr{O}_B\oplus\mathscr{M}\oplus \mathscr{L}^{\otimes 2})$ \cite{EKY}.
Its generic fiber is of the type of ``cubic hyperbola'' that appeared in Newton's classification of cubics. It extrapolates between the  E$_7^{'}$ family (at $\mathscr{M}=\mathscr{L}^{\otimes 2}$) and the family introduced by  Cacciatori, Cattaneo, and van Geemen  in \cite{Cacciatori:2011nk}, which corresponds to $\mathscr{M}=\mathscr{O}_B$. 
\end{enumerate}

\subsection{Newton's cubic hyperbola, Morrison-Park model, and the $Q_7$-model}\label{Sec:Q7}
Morrison and Park have popularized in the F-theory literature the Jacobian of an elliptic fibration of rank one  \cite{Morrison:2012ei}.
The model is also well-known to number theorists since it corresponds to a  genus-one normal curve of degree  two  in the special case where the degree-two divisor splits into two rational points without having to introduce a field extension. 
 A smooth model sharing the same Jacobian  appeared in the F-theory literature a year earlier in \cite{Cacciatori:2011nk} where Cacciatori, Cattaneo, and Geemen studied tadpole cancellation conditions in the spirit of \cite{AE2,AE1}  in a new projective bundle.  The Morrison--Park model  \cite{Morrison:2012ei} matches the Jacobian of the new model of Cacciatori, Cattaneo, and Geemen  \cite{Cacciatori:2011nk} after the change of variables 
\begin{equation}
c_0 \to -\frac{1}{6} c_0, \   
 c_1 \to  -c_1, \  
 c_2\to -6  c_2,\  
 c_3 \to -36 c_3, \   
 b \to  6 \sqrt{6} b_2.
\end{equation}
Both models can be traced directly to  Newton's cubic hyperbola as explained in \cite{EKY}, as they can both be written as the following cubic hyperbola:
\begin{equation}
y x^2+b z^2 x=  c_0 y^3 + c_1 y^2 z + c_2 y z^2 + c_3 z^3.
\end{equation}
We call this model a $Q_7$-model because the Newton's polygon of the defining equation is a quadrilateral  with seven points on its boundary and a unique  interior point (ensuring that it describes a curve of arithmetic genus one). 
This Newton's polygon is one of the sixteen reflexive polygons of degree one  \cite{Batyrev.1985,Koelman}. 
Since the genus-one fibration has a rational section at $z=y=0$, it is birational to a Weierstrass model, which can be determined using the results of \cite{Tate:2005}. 
The model of type  $Q_7(\mathscr{L},\mathscr{M})$ is a hypersurface in a projective bundle generalizing the elliptic fibration introduced in \cite{Cacciatori:2011nk}. 
We introduce two line bundles $\mathscr{L}$ and $\mathscr{M}$ and define the ambient space as the projective bundle  $X_0=\mathbb{P}(\mathscr{O}_B\oplus\mathscr{M}\oplus \mathscr{L})$ \cite{EKY}.
We denote by $\mathscr{O}(1)$ the dual of the tautological line bundle of $X_0$. 
 The projective coordinates $[z:y:x]$ are such that  $z,y,x$ are respectively sections of $\mathscr{O}(1)$, $\mathscr{O}(1)\otimes \mathscr{L}$, and $\mathscr{O}(1)\otimes \mathscr{M}$. The defining equation is a section  of the 
line bundle $\mathscr{O}(3)\otimes \pi^* \mathscr{L}^{\otimes 2}\otimes \pi^* \mathscr{M}$. The coefficient $c_i$ is a  section of $\mathscr{L}^{\otimes 2}\otimes\mathscr{M}^{\otimes(-2+i)}$ and $b$ is a section of $\mathscr{L}\otimes\mathscr{M}$.  
At $z=0$, we also have the degree-two divisor $z=x^2-c_0 y^2=0$. Thus, it follows from Riemann-Roch that the $Q_7$-model is also birational to a quartic model \cite{EKY}. The quartic model can be derived by multiplying both side of the defining equation by $y$ and introduce the new variable $u=x y$. 
This gives the quartic model
\begin{equation}
u(u+b z^2)=  y(c_0 y^3 + c_1 y^2 z + c_2 y z^2 + c_3 z^3).
\end{equation}
The variable $u$ is a section of $\mathscr{O}(2)\otimes\pi^* \mathscr{L}\otimes \pi^*\mathscr{M}$. 
 The ambient space is the weighted projective bundle 
$\pi:\mathbb{P}_{1,2,1}(\mathscr{O}_B\oplus \mathscr{U}\oplus \mathscr{L})\longrightarrow B$ with $\mathscr{U}=\mathscr{L}\otimes\mathscr{M}$. The defining equation is a section of $\mathscr{O}(4)\otimes \pi^* \mathscr{U}^{\otimes 2}$.
The quartic equation has  double point singularities at $u=b=y=c_3=0$.

\subsection{$G$-models for $G=\text{SO}(n)$ with $n=3,4,5,6$}\label{sec:SO(n)}
The SO($3$)-model is the generic case of a Weierstrass model with a $\mathbb{Z}/2\mathbb{Z}$ torsion \cite{Aspinwall:1998xj}:
\begin{equation}
 y^2z= x(x^2 + a_2 x z+ a_4 z^2).
\end{equation}
There is a fiber of type I$_2$ over the generic point of $V(a_4)$. The Weierstrass models for the SO($3$), SO($5$), and SO($6$)-models are summarized in Table \ref{Table:Weierstrass}. The SO($5$) and SO($6$)-models are derived in \cite{SO}.

The SO($5$)-model is obtained from the SO($3$)-model  via a base change that converts the section $a_4$ to a perfect square $a_4=t^2$ where $t$ is a section of $\mathscr{L}^{\otimes 2}$. 
The generic fiber over the generic point of $V(t)$ is of type I$_4^{\text{ns}}$. Moreover, $a_2$ must be nonzero, otherwise the Mordell--Weil group becomes $\mathbb{Z}/2\mathbb{Z}\times \mathbb{Z}/2\mathbb{Z}$, which in turn implies $G = 
\text{PSO}(5)$ rather than SO($5$). 

The SO($6$)-model is subsequently derived from the SO($5$)-model by requiring that the section $a_2$ is a perfect square modulo $t$ so that the fiber type over $V(t)$ is of type I$_4^{\text{s}}$.

The SO($4$)-model is defined by a collision of two Kodaira fibers with dual graph $\widetilde{\text{A}}_1$ with a Mordell--Weil group $\mathbb{Z}/2\mathbb{Z}$ \cite{SO4}. The simplest SO($4$)-model is realized by III+III, the collision of two fibers of type III, in an elliptic fibration with a Mordell--Weil group $\mathbb{Z}/2\mathbb{Z}$. For all the possible realizations of SO($4$)-models, see Table \ref{Table:Weierstrass}.
 We give a crepant resolution for each model in Table \ref{tab:blowupcenters}.

\begin{table}[htb]
 \begin{center}
  $
\begin{array}{|c| c | c | l |}
\hline 
 \text{Group} & \text{Kodaira fibers}  &\text{Mordell--Weil}&  \text{Weierstrass model}  \\
\hline
\text{SO}(3) &   \text{I}_2^{\text{ns}}, \     \text{III}& \mathbb{Z}/ 2\mathbb{Z} & y^2z= x(x^2 + a_2 x z+ s z^2)\\
\hline 
 \multirow{5}{*}{SO($4$)}& \text{I}^{\text{ns}}_2+\text{I}^{\text{ns}}_2\to \text{I}^{\text{ns}}_4&   \multirow{5}{*}{$\mathbb{Z}/ 2\mathbb{Z}$} & y^2z=x^3+a_2x^2z+stxz^2  \\
&\text{I}^{\text{ns}}_2+\text{I}^{\text{s}}_2\to \text{I}^{\text{s}}_4 & & y^2z+a_1 xyz=x^3+\widetilde{a}_2 tx^2z+stxz^2 \\
 &\text{I}^{\text{s}}_2+\text{I}^{\text{s}}_2\to \text{I}^{\text{s}}_4 &   & y^2z+a_1 xyz=x^3+ \widetilde{a}_2 st x^2 z+stxz^2 \\
  &\text{III}+\text{I}^{\text{ns}}_2\to 1-2-1  & & y^2z=x^3+\widetilde{a}_2 s x^2 z+stxz^2 \\
 &\text{III+III}\to 1-2-1 &  &  y^2z=x^3+\widetilde{a}_2 st x^2 z+stxz^2 \\
\hline
\text{SO}(5) &  \text{I}_4^{\text{ns}} &  \mathbb{Z}/ 2\mathbb{Z}&  y^2z= x(x^2 + a_2 x z+ t^2 z^2)\\
\hline
\text{SO}(6)  &  \text{I}_4^{\text{s}}&  \mathbb{Z}/ 2\mathbb{Z}& y^2z+ a_1 y x  z= x(x^2 + m t  x z+ t^2z^2) ~~~ m\neq 0, \pm 2\\
\hline
\text{PSU}($3$) & \text{I}_3^{\text{s}}& \mathbb{Z}/3\mathbb{Z} & y^2z+ a_1 x y z +s yz^2=x^3\\
\hline
\end{array}
$
\end{center}
\caption{Weierstrass equations for the SO($n$)-models for $n=3,4,5,6$  \cite{SO,SO4} and the PSU($3$)-model \cite{Aspinwall:1998xj}. For the SO($3$)-model, the divisor supporting the group is $S=V(s)$, where $s$ is a section of $\mathscr{L}^{\otimes 4}$. For the SO($5$) and SO($6$)-models, the divisor supporting the gauge group is $T=V(t)$, where $t$ is a section of $\mathscr{L}^{\otimes 2}$. 
 For SO($6$), $m$ is a constant number different from $0$ and $\pm 2$ \cite{SO}.
 For the SO($4$)-model, the gauge group is semi-simple and we have two divisors $S=V(s)$ and $T=V(t)$, whose classes satisfy the relation $S+T=4L$. 
For the PSU($3$)-model, the divisor supporting the group is $S=V(s)$, where $s$ is a section of $\mathscr{L}^{\otimes 3}$.
}
\label{Table:Weierstrass}
\end{table}

\subsection{PSU($3$)-model} \label{PSU(3)model}\label{sec:PSU3}
The PSU($3$)-model is the generic Weierstrass model with torsion $\mathbb{Z}/3\mathbb{Z}$ \cite{Aspinwall:1998xj}:
\begin{equation}
y^2z+ a_1 x y z +s yz^2=x^3.
\end{equation}
Its discriminant  is 
\begin{equation}
\Delta=\frac{1}{16}s^3 (27 s -a_1^3).
\end{equation}
The model has a singular fiber of type I$_3^{\text{s}}$ over $V(s)$ where $s$ is a smooth section of the line bundle $\mathscr{L}^{\otimes 3}$. 

A crepant resolution is an embedded resolution defined by a sequence of blowups with smooth centers. We denote the blowup $X_{i+1}\longrightarrow X_i$ along the ideal $(f_1,f_2,\ldots,f_n)$ with exceptional divisor $E$ as:
$$\begin{tikzcd}[column sep=2.4cm]X_i \arrow[leftarrow]{r} {\displaystyle (f_1,\ldots, f_n|E)}  & X_{i+1}\end{tikzcd}.$$
A crepant resolution of a PSU($3$)-model is given by the following sequence of blowups 
\begin{equation}
 \begin{tikzpicture}
	\node(X0) at (0,0){$X_0$};
	\node(X1) at (2.5,0){$X_1$};
	\node(X2) at (5,0){$X_2.$};
	\draw[big arrow] (X1) -- node[above,midway]{$(x,y,s|e_1)$} (X0);	
	\draw[big arrow] (X2) -- node[above,midway]{$(y,e_1|e_2)$} (X1);		
 \end{tikzpicture}
 \end{equation}

\section{Pushforward theorems}
\label{sec:push}
In this section, we prove the pushforward theorems that are the heart of our  computations. 
Throughout this paper, we work over the field of complex numbers. 
A variety is a reduced and irreducible algebraic scheme. 
 We denote the vanishing locus of the sections $f_1, \ldots, f_n$ by $V(f_1, \ldots, f_n)$. 
  The tangent bundle of a variety $X$ is denoted by $TX$ and the  normal bundle of a subvariety $Z$ of a  variety $X$ is denoted by  $N_Z X$. 
 Let $\mathscr{V}\rightarrow B$ be a vector bundle over a variety $B$. We denote the by $\mathbb{P}(\mathscr{V})$ the projective bundle of lines in  $\mathscr{V}$.
 The Chow group $A_*(X)$ of a nonsingular variety $X$ is the group of divisors modulo rational equivalence \cite[Chap. 1,\S 1.3]{Fulton.Intersection}.
We use $[V]$ to refer to the class of a subvariety $V$ in $A_*(X)$.
\begin{defn}[Pushforward,  {\cite[Chap. 1, p. 11]{Fulton.Intersection}}] 
 Let $f: X\longrightarrow Y$ be a proper morphism. 
Let $V$ be a subvariety of $X$, the image $W=f(V)$ a subvariety of $Y$, and the function field $R(V)$ an extension of the function field $R(W)$. 
The pushforward $f_* : A_*(X)\longrightarrow A_*(Y)$ is defined as follows
\begin{equation}
f_* [V]= \begin{cases}
0  &  \text{if} \quad \dim V\neq \dim W,\\
[R(V):R(W)] \   [V_2]  & \text{if} \quad \dim V= \dim W,
\end{cases}
\end{equation}
where $[R(V):R(W)]$ is the degree of the field extension $R(V)/R(W)$. 
\end{defn}
\begin{defn}[Degree, {\cite[Chap. 1, p. 13]{Fulton.Intersection}}]
The degree of a class $\alpha$ of $A_*(X)$ is denoted by  $\int_X \alpha$ (or simply $\int \alpha$ if there is no ambiguity in the choice of $X$), and is defined to be the degree of its component in $A_0(X)$.
\end{defn}
The total homological Chern class $c(X)$ of any nonsingular variety $X$ of dimension $d$ is defined as
\begin{equation}
c(X)=c(TX)\cap [X],
\end{equation}
where $TX$ is the tangent bundle of $X$ and $[X]$ is the class of $X$ in the Chow ring. The degree of $c(X)$ is the topological Euler characteristic of $X$: 
\begin{equation}
\chi(X)=\int_X c(X).
\end{equation}
The following Lemma gives an important functorial property of the degree.
\begin{lem}[{\cite[Chap. 1, p. 13]{Fulton.Intersection}}]  \label{lem:Push} Let $f:X\longrightarrow Y$  be a proper map between varieties. 
 For any class $\alpha$ in the Chow ring $A_*(X)$ of $X$:
\begin{equation}
\int_X \alpha=\int_Y f_* \alpha.
\end{equation}
\end{lem}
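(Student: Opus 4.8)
The plan is to exploit the fact that the degree $\int_X$ sees only the zero-dimensional part of a class, together with the dimension-preserving property of proper pushforward, so that the statement collapses to the case of a single closed point. Since the whole argument is multiplicative over the cycle decomposition, the essential work is a one-line computation once the reductions are in place.

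First I would reduce to the zero-dimensional component. By the definition of degree, $\int_X \alpha$ is the degree of the component $\alpha_0$ of $\alpha$ lying in $A_0(X)$, so $\int_X \alpha = \int_X \alpha_0$. The proper pushforward respects dimension, $f_*\colon A_k(X)\to A_k(Y)$, hence the $A_0(Y)$-component of $f_*\alpha$ is exactly $f_*\alpha_0$, and consequently $\int_Y f_*\alpha = \int_Y f_*\alpha_0$. Thus it suffices to prove the identity for a zero-cycle class $\alpha_0 \in A_0(X)$. Next, by $\mathbb{Z}$-linearity of $\int_X$, $\int_Y$, and $f_*$, I would reduce further to a generator $\alpha_0 = [P]$, where $P$ is a single closed point of $X$. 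Since $f$ is proper, the image $Q = f(P)$ is a closed point of $Y$, and $P$ and $Q$ both have dimension zero, so the definition of the pushforward places us in the second case and gives $f_*[P] = [R(P):R(Q)]\,[Q]$.

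Finally I would compute both sides. Working over $\mathbb{C}$, as throughout the paper, every closed point has residue field $\mathbb{C}$, so $[R(P):\mathbb{C}] = [R(Q):\mathbb{C}] = 1$ and $[R(P):R(Q)] = 1$; hence $\int_X [P] = 1$ and $\int_Y f_*[P] = [R(P):R(Q)]\,\int_Y[Q] = 1$, which agree. Over a general base field $k$ the same argument goes through verbatim using the multiplicativity of field-extension degrees, $[R(P):k] = [R(P):R(Q)]\,[R(Q):k]$.

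The step I regard as the genuine content, rather than bookkeeping, is the legitimacy of the first reduction: that $f_*$ commutes with extraction of the $A_0$-component. This rests on the facts that proper pushforward is dimension-preserving and that it is well-defined on rational equivalence classes, both of which I take as given from the construction of $f_*$ recalled in the excerpt. A slicker but less self-contained alternative is to identify $\int_X$ with the pushforward $(p_X)_*$ along the structure morphism $p_X\colon X\to \operatorname{Spec}\mathbb{C}$, and then invoke the functoriality $(p_Y\circ f)_* = (p_Y)_*\circ f_*$ together with $p_X = p_Y\circ f$; the present lemma is then the special case read off in $A_0(\operatorname{Spec}\mathbb{C})\cong\mathbb{Z}$.
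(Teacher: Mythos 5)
The paper never proves this lemma --- it quotes it from Fulton (Chapter~1, p.~13), where it is obtained as a formal corollary of the functoriality of proper pushforward: writing $\int_X = (p_X)_*$ for the structure morphism $p_X : X \to \operatorname{Spec}\mathbb{C}$ and using $(p_Y)_*\circ f_* = (p_Y\circ f)_* = (p_X)_*$. That is precisely the argument you relegate to your closing remark, so your ``slicker alternative'' is in fact the reference proof. Your main route --- extract the $A_0$-component using that $f_*$ preserves the dimension grading, reduce to a single closed point by linearity, and check the point case via multiplicativity of residue-field degrees $[R(P):k]=[R(P):R(Q)]\,[R(Q):k]$ --- is correct, and it has the virtue of being a hands-on verification straight from the definitions recalled in the paper. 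What the functorial formulation buys, and what your write-up slightly obscures, is a clean accounting of where the nontrivial content lives: both of your reductions and the very notion of $\int_X\alpha$ on \emph{classes} rest on (i) proper pushforward descending to rational equivalence (Fulton's Theorem~1.4, the hard input) and (ii) completeness of the varieties, without which the degree of a class is not even well defined (on $\mathbb{A}^1$ one has $\operatorname{div}(x-a)=[a]$, so every point is rationally equivalent to zero while its cycle-level degree is $1$). The paper's statement leaves completeness implicit, and so do you; since your computation happens at the level of cycle representatives, it goes through, but the separation you draw between ``bookkeeping'' and ``genuine content'' is finer than it appears --- the bookkeeping is only legitimate because of exactly the two facts above, which the functorial proof isolates in a single line. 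Two minor points: properness of $f$ is not what makes $f(P)$ a closed point (any finite-type morphism over a field does that); it is what makes $f_*$ exist on Chow groups at all.
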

 Lemma \ref{lem:Push} means that an intersection number in $X$ can be computed in $Y$ through a pushforward of a proper map $f:X\longrightarrow Y$. 
 This simple fact  has far-reaching consequences as it allows us to express the topological invariants of an elliptic fibration in terms of those of the base.

Let $X$ be a projective variety with at worst canonical Gorenstein singularities. 
We denote the canonical class by $K_X$. 
\begin{defn}
A birational projective  morphism $\rho:Y\longrightarrow X$ is called a \emph{crepant desingularization} of $X$ if $Y$ is smooth and 
$K_Y=\rho^* K_X$. 
\end{defn}

\begin{defn}
A resolution of singularities of a variety $Y$ is a proper surjective birational morphism $\varphi:\widetilde{Y}\longrightarrow Y$  such that  
$\widetilde{Y}$ is nonsingular
and  $\varphi$ is an isomorphism away  from the singular  locus of $Y$. In other words, $\widetilde{Y}$ is nonsingular and  if $U$ is the singular locus of $Y$, $\varphi$ maps $\varphi^{-1}(Y\setminus U)$  isomorphically  onto $Y\setminus U$.  
 A \emph{crepant resolution of singularities}  is a resolution of singularities such that  $K_Y=f^* K_X$. 
\end{defn}

When pushing forward blowups of  a projective bundle $\pi: X_0=\mathbb{P}[\mathscr{O}_B\oplus\mathscr{L}^{\otimes 2} \oplus \mathscr{L}^{\otimes 3}]\longrightarrow B$, the key ingredients are the following three theorems. 
The first one is a theorem of Aluffi which gives the Chern class after a blowup along a local complete intersection. 
The second theorem is a pushforward theorem that provides a user-friendly method to compute invariant of the blowup space in terms of the original space. 
The last theorem is a direct consequence of functorial properties of the Segre class and gives a simple method to pushforward analytic expressions in the Chow ring of the projective bundle $X_0$ to  the Chow ring of its base. 
We follow mostly \cite{Euler}.

\begin{thm}[Aluffi, {
{\cite[Lemma 1.3]{Aluffi_CBU}}}]
\label{Thm:AluffiCBU}
Let $Z\subset X$ be the  complete intersection  of $d$ nonsingular hypersurfaces $Z_1$, \ldots, $Z_d$ meeting transversally in $X$.  Let  $f: \widetilde{X}\longrightarrow X$ be the blowup of $X$ centered at $Z$. We denote the exceptional divisor of $f$  by $E$. The total Chern class of $\widetilde{X}$ is then:
\begin{equation}
c( T{\widetilde{X}})=(1+E) \left(\prod_{i=1}^d  \frac{1+f^* Z_i-E}{1+ f^* Z_i}\right)  f^* c(TX).
\end{equation}
\end{thm}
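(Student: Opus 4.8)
The plan is to compare the cotangent bundle of $\widetilde X$ with that of $X$ through the two standard exact sequences attached to the blowup of a smooth center, exploiting the fact that transversality forces the normal bundle of $Z$ to split into line bundles. Write $j:E\hookrightarrow \widetilde X$ for the inclusion of the exceptional divisor, $g:E\to Z$ for the induced projection, and $N=N_Z X$, so that $E=\mathbb{P}(N)$. Since the $Z_i=V(s_i)$ meet transversally, $Z$ is smooth of codimension $d$ and its conormal sequence splits, giving
\begin{equation}
N=\bigoplus_{i=1}^d \mathscr{O}_X(Z_i)\big|_Z .
\end{equation}
Two facts will be used repeatedly: the normal bundle of $E$ is the tautological bundle, so $j^*E=-\zeta$ with $\zeta=c_1(\mathscr{O}_E(1))$; and $j^*f^*Z_i=g^*(Z_i|_Z)=:\lambda_i$. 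Equivalently, the total transform decomposes as $f^*Z_i=\widehat Z_i+E$ with $\widehat Z_i$ the smooth strict transform, which already identifies the factor $1+f^*Z_i-E$ in the statement as $c(\mathscr{O}_{\widetilde X}(\widehat Z_i))$.

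First I would invoke the blowup cotangent sequence for a smooth center,
\begin{equation}
0\longrightarrow f^*\Omega_X \longrightarrow \Omega_{\widetilde X}\longrightarrow j_*\Omega_{E/Z}\longrightarrow 0 ,
\end{equation}
whose first map is injective because $f$ is an isomorphism away from $E$ and $f^*\Omega_X$ is locally free. Taking total Chern classes gives $c(\Omega_{\widetilde X})=f^*c(\Omega_X)\,c(j_*\Omega_{E/Z})$, so everything reduces to computing $c(j_*\Omega_{E/Z})$. The key structural point is that $j$ is a closed immersion, hence $j_*$ is exact and $c(j_*(-))$ is multiplicative over short exact sequences on $E$. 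Applying this to the dualized relative Euler sequence of $E=\mathbb{P}(N)\to Z$,
\begin{equation}
0\longrightarrow \Omega_{E/Z}\longrightarrow \mathscr{O}_E(-1)\otimes g^*N^\vee \longrightarrow \mathscr{O}_E\longrightarrow 0 ,
\end{equation}
yields $c(j_*\Omega_{E/Z})=c\big(j_*(\mathscr{O}_E(-1)\otimes g^*N^\vee)\big)\big/c(j_*\mathscr{O}_E)$.

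The remaining computation is purely in terms of line bundles because $N$ splits. The elementary building block is the identity
\begin{equation}
c\big(j_*\mathscr{O}_E(\tilde D|_E)\big)=\frac{1+\tilde D}{1+\tilde D-E},
\end{equation}
valid for any class $\tilde D$ on $\widetilde X$, which follows from $0\to \mathscr{O}_{\widetilde X}(\tilde D-E)\to \mathscr{O}_{\widetilde X}(\tilde D)\to j_*\mathscr{O}_E(\tilde D|_E)\to 0$. Since $N$ splits, $\mathscr{O}_E(-1)\otimes g^*N^\vee$ is a sum of $d$ line bundles, the $i$-th being the restriction of $\mathscr{O}_{\widetilde X}(E-f^*Z_i)$ (using $j^*E=-\zeta$ and $\lambda_i=j^*f^*Z_i$), so each contributes $\tfrac{1+E-f^*Z_i}{1-f^*Z_i}$, while $c(j_*\mathscr{O}_E)=1/(1-E)$. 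Collecting factors gives
\begin{equation}
c(\Omega_{\widetilde X})=f^*c(\Omega_X)\,(1-E)\prod_{i=1}^d \frac{1+E-f^*Z_i}{1-f^*Z_i},
\end{equation}
and dualizing, i.e. replacing each divisor class by its negative, produces exactly $(1+E)\prod_i \tfrac{1+f^*Z_i-E}{1+f^*Z_i}\,f^*c(TX)$. As sanity checks, the case $d=1$ collapses to $c(T\widetilde X)=f^*c(TX)$, consistent with the fact that blowing up a divisor changes nothing, and the degree-one part reproduces the discrepancy formula $K_{\widetilde X}=f^*K_X+(d-1)E$.

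The main obstacle is not the algebra but justifying the two input sequences cleanly: identifying the cokernel of $f^*\Omega_X\to\Omega_{\widetilde X}$ as precisely $j_*\Omega_{E/Z}$ with no hidden twist, and establishing the pushforward building block together with the exactness of $j_*$ that makes $c(j_*(-))$ multiplicative. Once these are in place, the splitting of $N$ coming from transversality turns every contribution into a product of line-bundle factors, and the bookkeeping is routine. I would pin down all signs and the absence of twists by checking the formula against $d=1$, the canonical-class computation, and the model case $X=\mathbb{P}^n$ with $Z$ a linear subspace.
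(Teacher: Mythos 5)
The paper never proves this statement: it is imported verbatim as a citation to Aluffi's Lemma 1.3 of \cite{Aluffi_CBU}, so there is no internal argument to compare yours against, and your proposal must stand on its own. It does. The two inputs you flag as the real content are both standard and correctly justified: the sequence $0 \to f^*\Omega_X \to \Omega_{\widetilde X} \to j_*\Omega_{E/Z} \to 0$ is the classical cotangent sequence of a blowup along a smooth center (cf. Fulton, \emph{Intersection Theory}, \S 15.4), with left-exactness following exactly as you say from $f^*\Omega_X$ being locally free and $f$ being a generic isomorphism; and the Whitney formula for short exact sequences of \emph{coherent} sheaves on a smooth variety (Borel--Serre), combined with exactness of $j_*$ for a closed immersion, legitimizes both the multiplicativity of $c(j_*(-))$ and your building block $c\bigl(j_*j^*\mathscr{O}_{\widetilde X}(\tilde D)\bigr)=(1+\tilde D)/(1+\tilde D-E)$. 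The remaining identifications all check: transversality splits $N_ZX \cong \oplus_i \mathscr{O}_X(Z_i)|_Z$; $N_E\widetilde X \cong \mathscr{O}_E(-1)=j^*\mathscr{O}_{\widetilde X}(E)$, so each summand of $\mathscr{O}_E(-1)\otimes g^*N^\vee$ equals $j^*\mathscr{O}_{\widetilde X}(E-f^*Z_i)$ as an actual line bundle, not merely in first Chern class; and the final dualization is the degree involution acting by $(-1)^k$ on $A^k$, a ring automorphism commuting with $f^*$, so it converts your cotangent identity into the stated tangent one. Your sanity checks are also right: for $d=1$ one has $E=f^*Z_1$ and the correction factor collapses to $1$, and the degree-one part yields $K_{\widetilde X}=f^*K_X+(d-1)E$. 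In short, your argument is a correct, self-contained reconstruction of the cited result along the classical route (the same exact sequences that underlie the Porteous-type blowup formula in Fulton \S 15.4), which is precisely the kind of proof the paper delegates to the literature.
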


\begin{lem}[See {\cite{Aluffi_CBU,Euler}}] \label{Thm:PushE}
\label{lem:symhom}
Let  $f: \widetilde{X}\longrightarrow X$ be the blowup of $X$ centered at $Z$. We denote the exceptional divisor of $f$  by $E$. Then 
\begin{equation}
 f_* E^n=
 (-1)^{d+1} h_{n-d} (Z_1, \ldots, Z_d) Z_1\cdots Z_d,  
\end{equation}
where $h_i(x_1, \ldots, x_k)$ is the complete homogeneous symmetric polynomial of degree $i$ in $(x_1, \ldots, x_k)$ with the convention that $h_i$ is identically zero for $i<0$ and $h_0=1$.  
\end{lem}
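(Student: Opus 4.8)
The plan is to exploit the projective bundle structure of the exceptional divisor together with the self-intersection formula and a Segre-class computation. Since $Z = Z_1 \cap \cdots \cap Z_d$ is a transverse complete intersection, its normal bundle splits as $N := N_Z X \cong \bigoplus_{i=1}^d \mathscr{O}_X(Z_i)|_Z$, and the exceptional divisor is the projectivization $E = \mathbb{P}(N)$ with projection $g : E \to Z$ of relative dimension $d-1$. Writing $\iota : Z \hookrightarrow X$ and $i : E \hookrightarrow \widetilde X$ for the inclusions, the restriction of $f$ to $E$ factors as $\iota \circ g$, so that $f_* i_* = \iota_* g_*$. Set $z_i := \iota^* Z_i \in A_*(Z)$; then $c(N) = \prod_{i=1}^d (1 + z_i)$.

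First I would reduce the self-intersection $E^n$ to a class supported on $E$. Using that $E = i_*[E]$ and the projection formula $E \cdot i_* \beta = i_*(i^* E \cdot \beta)$, one obtains $E^n = i_*\big((i^* E)^{n-1}\big)$. The key geometric input is that $\mathscr{O}_{\widetilde X}(E)|_E$ is the tautological line bundle $\mathscr{O}_{\mathbb{P}(N)}(-1)$, so that $i^* E = -\zeta$ where $\zeta := c_1(\mathscr{O}_{\mathbb{P}(N)}(1))$. Hence $f_* E^n = (-1)^{n-1}\, \iota_* g_*\big(\zeta^{\,n-1}\big)$.

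Next I would carry out the fibre integration along $g$. By the defining property of Segre classes, $g_*\big(\zeta^{\,d-1+j}\big) = s_j(N)$ for $j \ge 0$ and the pushforward vanishes in lower degree; writing $n - 1 = (d-1) + (n-d)$ gives $g_*(\zeta^{n-1}) = s_{n-d}(N)$. Since $s(N) = c(N)^{-1} = \prod_{i=1}^d (1+z_i)^{-1}$ and the generating identity $\prod_i (1 - t_i)^{-1} = \sum_m h_m(t_1,\dots,t_d)$ yields $s_j(N) = (-1)^j h_j(z_1,\dots,z_d)$, I obtain $g_*(\zeta^{n-1}) = (-1)^{n-d} h_{n-d}(z_1,\dots,z_d)$. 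Finally, because $\iota^*$ is a ring homomorphism we have $h_{n-d}(z_1,\dots,z_d) = \iota^* h_{n-d}(Z_1,\dots,Z_d)$, and the projection formula together with $\iota_*[Z] = Z_1 \cdots Z_d$ gives $\iota_*\big(h_{n-d}(z_1,\dots,z_d)\big) = h_{n-d}(Z_1,\dots,Z_d)\, Z_1 \cdots Z_d$. Collecting the signs $(-1)^{n-1}(-1)^{n-d} = (-1)^{d+1}$ produces the claimed formula.

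I expect the main obstacle to be the bookkeeping of sign conventions rather than any deep difficulty: the three ingredients that must be made mutually consistent are the identification $i^* E = -\zeta$, the Segre normalization $g_* \zeta^{\,d-1+j} = s_j(N)$ with $s(N) = c(N)^{-1}$, and the resulting shift $s_{n-d} = (-1)^{n-d} h_{n-d}$ relating Segre classes to complete homogeneous symmetric polynomials. Transversality of the $Z_i$ is what guarantees both the splitting of $N$ and the identity $\iota_*[Z] = Z_1\cdots Z_d$, so it is used essentially; the degenerate ranges ($n < d$, where $h_{n-d} = 0$, and $n = d$, where $h_0 = 1$) are handled automatically by the conventions on $h_i$ stated in the lemma.
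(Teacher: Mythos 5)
Your proof is correct. Note that the paper does not actually prove this lemma: it is stated with a pointer to Aluffi's blow-up paper and to Esole--Jefferson--Kang, so there is no in-paper argument to compare against, and your write-up supplies exactly the details the citation defers. Your route --- factor $f|_E$ as $\iota\circ g$ with $E=\mathbb{P}(N_ZX)$, reduce $E^n$ to $i_*\bigl((i^*E)^{n-1}\bigr)$ by the projection formula, identify $i^*E=-\zeta$, then fiber-integrate using $g_*\zeta^{\,d-1+j}=s_j(N)$, $s(N)=c(N)^{-1}$, and $s_j=(-1)^jh_j$ --- is the standard argument and is essentially what the cited references do; the three sign conventions you flag are used consistently, the transversality hypothesis enters exactly where you say (splitting of $N$ and $\iota_*[Z]=Z_1\cdots Z_d$), and the degenerate ranges $n<d$ and $n=d$ come out right. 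The only implicit restriction is $n\geq 1$: for $n=0$ the left side is $[X]$ while the right side vanishes, so the lemma (in the paper and in your proof, via the step $E^n=i_*((i^*E)^{n-1})$) tacitly assumes $n\geq 1$; you could state this. A slightly slicker packaging of the same computation, worth knowing: the birational invariance of Segre classes gives $f_*\,s(E,\widetilde{X})=s(Z,X)$, i.e. $f_*\sum_{k\geq 1}(-1)^{k-1}E^k=Z_1\cdots Z_d\,\prod_{i=1}^d(1+Z_i)^{-1}$, and extracting the codimension-$n$ component yields the formula for all $n\geq 1$ simultaneously with the same sign bookkeeping, avoiding the term-by-term fiber integration.
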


\begin{thm}[Esole--Jefferson--Kang,  see  {\cite{Euler}}] \label{Thm:Push}
    Let the nonsingular variety $Z\subset X$ be a complete intersection of $d$ nonsingular hypersurfaces $Z_1$, \ldots, $Z_d$ meeting transversally in $X$. Let $E$ be the class of the exceptional divisor of the blowup $f:\widetilde{X}\longrightarrow X$ centered 
at $Z$.
 Let $\widetilde{Q}(t)=\sum_a f^* Q_a t^a$ be a formal power series with $Q_a\in A_*(X)$.
 We define the associated formal power series  ${Q}(t)=\sum_a Q_a t^a$, whose coefficients pullback to the coefficients of $\widetilde{Q}(t)$. 
 Then the pushforward $f_*\widetilde{Q}(E)$ is
\begin{equation}
  f_*  \widetilde{Q}(E) =  \sum_{\ell=1}^d {Q}(Z_\ell) M_\ell, \quad \text{where} \quad  M_\ell=\prod_{\substack{m=1\\
 m\neq \ell}}^d  \frac{Z_m}{ Z_m-Z_\ell }.
\end{equation}
\end{thm}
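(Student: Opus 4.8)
The plan is to reduce the statement to a single monomial by linearity, apply the projection formula together with Lemma \ref{Thm:PushE}, and then recognize the resulting expression as two classical symmetric-function identities in the $Z_i$. Since the Chow ring $A_*(\widetilde{X})$ has only finitely many nonzero graded pieces, the formal series $\widetilde{Q}(E)=\sum_a f^* Q_a E^a$ is in fact a finite sum, so there are no convergence issues and I may work term by term, summing the contributions at the end.

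For each index $a$, the projection formula for the proper morphism $f$ gives $f_*(f^* Q_a \cdot E^a) = Q_a \cdot f_* E^a$. For $a \geq 1$ I insert the value from Lemma \ref{Thm:PushE}, namely $f_* E^a = (-1)^{d+1} h_{a-d}(Z_1,\ldots,Z_d)\, Z_1\cdots Z_d$. The term $a=0$ has to be treated separately: there $E^0=[\widetilde{X}]$ and $f_*[\widetilde{X}]=[X]$ because $f$ is birational, so this term contributes exactly $Q_0$. Collecting everything, the left-hand side becomes
\[
f_* \widetilde{Q}(E) = Q_0 + (-1)^{d+1}\, Z_1\cdots Z_d \sum_{a\geq 1} Q_a\, h_{a-d}(Z_1,\ldots,Z_d).
\]

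It then remains to verify that the closed form claimed in the theorem reproduces this. Expanding $\sum_\ell Q(Z_\ell) M_\ell = \sum_a Q_a\big(\sum_\ell Z_\ell^a M_\ell\big)$ and factoring $Z_1\cdots Z_d$ out of $M_\ell$ via $\prod_{m\neq\ell} Z_m = (Z_1\cdots Z_d)/Z_\ell$ reduces the claim, coefficient by coefficient in the arbitrary classes $Q_a$, to two purely algebraic identities. The first, governing $a=0$, is $\sum_{\ell=1}^d M_\ell = 1$, which is Lagrange interpolation of the constant function $1$ evaluated at $0$. The second, governing $a\geq 1$, is
\[
\sum_{\ell=1}^d \frac{Z_\ell^{a-1}}{\prod_{m\neq \ell}(Z_\ell - Z_m)} = h_{a-d}(Z_1,\ldots,Z_d),
\]
which follows from the partial-fraction decomposition of the generating function $\prod_{i=1}^d (1-Z_i t)^{-1} = \sum_{k\geq 0} h_k(Z)\, t^k$ in the variable $t$, reading off the coefficient of $t^{a-d}$. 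The sign $\prod_{m\neq\ell}(Z_m - Z_\ell) = (-1)^{d-1}\prod_{m\neq\ell}(Z_\ell - Z_m)$ then converts this into precisely the factor $(-1)^{d+1}$ appearing above, so the two expressions agree.

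The main obstacle is the careful bookkeeping of the $a=0$ term: Lemma \ref{Thm:PushE} is valid only for $a\geq 1$, since its right-hand side vanishes at $a=0$ while the true pushforward $f_*[\widetilde{X}]=[X]$ does not, and it is exactly the identity $\sum_\ell M_\ell = 1$ that makes the closed form absorb this term correctly. The underlying symmetric-function identity is classical and elementary; once it and the constant-term subtlety are in place, the proof is a matter of assembling the pieces.
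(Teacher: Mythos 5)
Your proof is correct. Note that the paper itself does not prove Theorem \ref{Thm:Push}; it states it with attribution and defers to the reference \cite{Euler}, so there is no in-paper argument to compare against. Your route --- linearity plus the projection formula, Lemma \ref{Thm:PushE} for $f_*E^a$ with $a\geq 1$, the separate treatment of $f_*[\widetilde{X}]=[X]$ at $a=0$, and the two classical identities $\sum_\ell M_\ell=1$ and $\sum_\ell Z_\ell^{a-1}\big/\prod_{m\neq\ell}(Z_\ell-Z_m)=h_{a-d}(Z_1,\ldots,Z_d)$ --- is exactly the natural (and, in the cited reference, essentially the original) derivation, and your observation that Lemma \ref{Thm:PushE} fails at $a=0$ while the interpolation identity $\sum_\ell M_\ell=1$ is what lets the closed form absorb that term is the key bookkeeping point. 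Two minor remarks: the generating-function argument via $\prod_i(1-Z_it)^{-1}=\sum_k h_k t^k$ literally yields the identity only for $a\geq d$; for $1\leq a\leq d-1$ you need the companion vanishing statement $\sum_\ell Z_\ell^{a-1}\big/\prod_{m\neq\ell}(Z_\ell-Z_m)=0$, which is the same Lagrange leading-coefficient fact you already invoke at $a=0$, so this is a gloss rather than a gap. Also, since the $Z_m-Z_\ell$ are not invertible in $A_*(X)$, the identities should be read as identities of symmetric polynomials (the apparent denominators cancel after summation), which is implicit in your ``purely algebraic identities in formal variables'' framing and is the standard interpretation under which the theorem is applied throughout the paper.
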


\begin{thm}[{\cite{Fullwood:SVW}}]\label{Thm:FW}
Let $\mathscr{L}$ be a line bundle over a variety  $B$. We denote the first Chern class of $\mathscr{L}$by $L=c_1(\mathscr{L})$.  
We define a vector bundle $\mathscr{E}= \mathscr{O}_B\oplus \cdots \oplus \mathscr{L}^{r_{n}}$ of rank $n+1$  
and the projective bundle of lines $\pi:X_0=\mathbb{P}(\mathscr{E})\longrightarrow B$. 
We denote by $\mathscr{O}_{X_0}(1)$ the dual of the tautological line of $\mathbb{P}(\mathscr{E})$. 
The first Chern class of $\mathscr{O}_{X_0}(1)$ is $c_1(\mathscr{O}_{X_0}(1))=H$. 
We assume that the rational fraction expansion of $\frac{1}
{c(\mathscr{E})}=\prod_{i=1}^{n} \frac{1}{(1+r_i L)}$ is 
\begin{equation}
\frac{1}
{c(\mathscr{E})}
=\sum_{i=1}^m\sum_{j=1}^{n_i} q_{ij} \frac{1}{(1+r_i L)^j}.
\end{equation}
Let $F=\pi^*F_0+ \pi^*F_1 H+ \pi^*F_2 H^2+\cdots$ be an analytic function of $H$ with  $F_i$ in the Chow ring of $B$. 
Then 
\begin{equation}
\pi_*(F)=\sum_{i=1}^m\sum_{j=1}^{n_i} q_{ij} \frac{1}{(j-1)!}\frac{d^{j-1}}{dH^{j-1}}\Big[\frac{1}{H^{n-j}}(F-F_0 -F_1 H -\cdots -F_{n-1}H^{n-1})\Big]\Big|_{H=-r_i L}.
\end{equation}
\end{thm}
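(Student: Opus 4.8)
The plan is to reduce the statement to the computation of the fiberwise pushforwards $\pi_*(H^a)$ and then repackage the answer through the partial fraction decomposition of $1/c(\mathscr{E})$. Since $\pi_*$ is $A_*(B)$-linear and $F=\sum_a \pi^* F_a\,H^a$, we immediately have $\pi_*(F)=\sum_a F_a\,\pi_*(H^a)$, so the entire content is to evaluate $\pi_*(H^a)$ in closed form and to recognize the result as the stated differential operator evaluated at $H=-r_iL$.

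First I would invoke the standard structure of the projective bundle $\pi:\mathbb{P}(\mathscr{E})\to B$ with $\mathscr{E}$ of rank $n+1$: the Chow ring $A_*(\mathbb{P}(\mathscr{E}))$ is a free $A_*(B)$-module with basis $1,H,\dots,H^n$, subject to $\sum_{k=0}^{n+1}\pi^* c_k(\mathscr{E})\,H^{n+1-k}=0$. Because $\mathscr{E}$ is a sum of line bundles, $c(\mathscr{E})=\prod_i(1+r_iL)$ and this relation is exactly $\prod_i (H+r_iL)=0$. The two facts I need are $\pi_*(H^a)=0$ for $a<n$ and $\pi_*(H^{n+k})=s_k(\mathscr{E})$, where $s(\mathscr{E})=c(\mathscr{E})^{-1}=\prod_i(1+r_iL)^{-1}$ is the Segre class. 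The vanishing $\pi_*(H^a)=0$ for $a<n$ is precisely what licenses replacing $F$ by $F-F_0-F_1H-\cdots-F_{n-1}H^{n-1}$ in the formula without changing the pushforward; I write $\widetilde F$ for this truncation, which is divisible by $H^n$.

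Next I would feed in the partial fraction decomposition $1/c(\mathscr{E})=\sum_{i,j} q_{ij}(1+r_iL)^{-j}$. Expanding $(1+r_iL)^{-j}=\sum_k\binom{-j}{k}(r_iL)^k$ expresses each Segre class $s_k(\mathscr{E})$ as an explicit polynomial in $L$, whence
\[
\pi_*(F)=\pi_*(\widetilde F)=\sum_{a\ge n} F_a\, s_{a-n}(\mathscr{E})=\sum_{i,j} q_{ij}\sum_{a\ge n} F_a\binom{-j}{\,a-n\,}(r_iL)^{a-n}.
\]
The last step is to recognize the inner sum as a Taylor coefficient: from $\binom{-j}{p}(r_iL)^p=\tfrac{1}{(j-1)!}\frac{d^{j-1}}{dH^{j-1}}\big[H^{\,p+j-1}\big]\big|_{H=-r_iL}$, summing over $p=a-n$ collapses the inner sum to $\tfrac{1}{(j-1)!}\frac{d^{j-1}}{dH^{j-1}}\big[\widetilde F(H)/H^{\,n+1-j}\big]\big|_{H=-r_iL}$, which is the asserted closed form. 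A useful sanity check is $\mathbb{P}(\mathscr{O}_B\oplus\mathscr{L})$ with $F=H$, where the formula must return $\pi_*(H)=1$; running the test fixes the exact exponent of $H$ in the denominator, which is the one index that is easy to get wrong.

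Equivalently, the whole computation is the residue statement that $\pi_*(F)$ is the coefficient of $w^{-1}$ in $w^{\,n-1}F(1/w)\prod_i(1+r_iLw)^{-1}$ (minus the residue at infinity of $F(H)\,dH/\prod_i(H+r_iL)$); inserting the partial fractions of $\prod_i(1+r_iLw)^{-1}$ reproduces the derivatives term by term. I expect the only real obstacle to be bookkeeping rather than ideas: matching the binomial coefficients to the $(j-1)$-fold derivative, pinning down the precise power of $H$, and handling the higher-order poles coming from repeated values of $r_i$, which is exactly what forces the multiplicities $n_i$ and the derivatives $\tfrac{d^{j-1}}{dH^{j-1}}$ to appear. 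Finally, I would note that all manipulations are legitimate because $L$ is nilpotent in $A_*(B)$, so every series in $L$ (and in $H$, modulo the degree-$(n+1)$ relation) is a finite sum; treating $L$ as a formal variable for the partial fraction step is justified by reading the identity as a polynomial identity in the $r_i$ that is then specialized.
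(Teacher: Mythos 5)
Your proof is correct, and there is nothing in the paper to compare it against: the paper quotes this theorem from \cite{Fullwood:SVW} without proof. Your route --- reduce to $\pi_*(H^a)=s_{a-n}(\mathscr{E})$ for $a\ge n$ and $\pi_*(H^a)=0$ for $a<n$ via the projective bundle structure, expand the Segre class $s(\mathscr{E})=c(\mathscr{E})^{-1}$ through the partial fraction decomposition, and reassemble the binomial coefficients $\binom{-j}{p}(r_iL)^p$ as $(j-1)$-fold derivatives --- is the standard argument, and your bookkeeping (including the treatment of repeated roots $r_i$, which is what forces the higher-order poles and the derivatives) checks out.

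One point deserves emphasis: the exponent you derive, $H^{n+1-j}$ (that is, $\mathrm{rank}(\mathscr{E})-j$), is the correct one, whereas the statement as printed in the paper has $H^{n-j}$, an off-by-one error. Your own sanity check settles this: for $\mathbb{P}(\mathscr{O}_B\oplus\mathscr{L})$ (so $n=1$, $m=1$, $n_1=1$, $q_{11}=1$, $r_1=1$) and $F=H$, the printed formula yields
\begin{equation*}
\Big[\tfrac{H}{H^{0}}\Big]\Big|_{H=-L}=-L,
\end{equation*}
whereas $\pi_*(H)=1$; your exponent yields $\big[H/H\big]\big|_{H=-L}=1$ as required. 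The paper's own specializations confirm your version rather than the printed one: Theorem \ref{Thm:PushH} and Theorem \ref{lem:PushH2} (rank $3$, $n=2$, simple poles $j=1$) both evaluate $\widetilde{F}/H^{2}$, i.e.\ use $H^{n+1-j}$, and Theorem \ref{lem:PushH4} (rank $4$, $n=3$, $j=3$) uses the denominator $H^{1}=H^{n+1-j}$; none of these is compatible with $H^{n-j}$. So your derivation in fact proves the corrected statement, which is the version actually used throughout Section \ref{sec:push}.
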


\begin{thm}[\cite{Euler}]\label{Thm:PushH}
Let $\mathscr{L}$ be a line bundle over a variety $B$ and $\pi: X_0=\mathbb{P}(\mathscr{O}_B\oplus\mathscr{L}^{\otimes 2} \oplus \mathscr{L}^{\otimes 3})\longrightarrow B$ a projective bundle over $B$. 
 Let $\widetilde{Q}(t)=\sum_a \pi^* Q_a t^a$ be a formal power series in  $t$ such that $Q_a\in A_*(B)$. Define the auxiliary power series $Q(t)=\sum_a Q_a t^a$. 
Then 
\begin{equation}
\pi_* \widetilde{Q}(H)=-2\left. \frac{{Q}(H)}{H^2}\right|_{H=-2L}+3\left. \frac{{Q}(H)}{H^2}\right|_{H=-3L}  +\frac{Q(0)}{6 L^2},
\end{equation}
 where  $L=c_1(\mathscr{L})$ and $H=c_1(\mathscr{O}_{X_0}(1))$ is the first Chern class of the dual of the tautological line bundle of  $ \pi:X_0=\mathbb{P}(\mathscr{O}_B \oplus\mathscr{L}^{\otimes 2} \oplus\mathscr{L}^{\otimes 3})\rightarrow B$.
\end{thm}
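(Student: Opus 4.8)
The plan is to obtain this formula as the specialization of Fullwood's pushforward theorem (Theorem \ref{Thm:FW}) to the rank-three bundle $\mathscr{E} = \mathscr{O}_B \oplus \mathscr{L}^{\otimes 2} \oplus \mathscr{L}^{\otimes 3}$, for which $n = 2$ and the two nontrivial twists are $r_1 = 2$ and $r_2 = 3$. First I would compute the total Chern class by the Whitney formula, using $c_1(\mathscr{L}^{\otimes k}) = kL$: since the $\mathscr{O}_B$ summand contributes a trivial factor, $c(\mathscr{E}) = (1+2L)(1+3L) = 1 + 5L + 6L^2$, so that $1/c(\mathscr{E}) = 1/\big((1+2L)(1+3L)\big)$. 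The next step is the partial-fraction expansion required by Theorem \ref{Thm:FW}. Both poles are simple, so
\begin{equation}
\frac{1}{(1+2L)(1+3L)} = \frac{-2}{1+2L} + \frac{3}{1+3L},
\end{equation}
which identifies $q_{11} = -2$ at $r_1 = 2$ and $q_{21} = 3$ at $r_2 = 3$, with $n_1 = n_2 = 1$.

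Because every pole is simple, the derivative operators $\frac{d^{j-1}}{dH^{j-1}}$ in Theorem \ref{Thm:FW} all collapse to the case $j = 1$ and no differentiation survives; only the evaluations at $H = -2L$ and $H = -3L$ remain. Substituting the residues $-2$ and $3$ and simplifying the $H$-power prefactors then produces the two terms $-2\,\frac{Q(H)}{H^2}\big|_{H=-2L}$ and $3\,\frac{Q(H)}{H^2}\big|_{H=-3L}$, while the remaining summand $\frac{Q(0)}{6L^2}$ is the contribution of the $\mathscr{O}_B$ factor, i.e. of the Chern root $\lambda = 0$. A transparent way to see all three terms at once is the localization form of the split-bundle pushforward,
\begin{equation}
\pi_* Q(H) = \sum_{\lambda \in \{0,\,2L,\,3L\}} \frac{Q(-\lambda)}{\prod_{\mu \neq \lambda}(\mu - \lambda)},
\end{equation}
whose three roots $0, 2L, 3L$ are exactly the twists of the summands of $\mathscr{E}$ and whose denominators evaluate to $6L^2$, $-2L^2$, and $3L^2$, reproducing the claimed coefficients.

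The step I expect to be the crux is verifying that the right-hand side is a genuine element of $A_*(B)$ rather than merely a Laurent expression in $L$. Evaluating $Q(H)/H^2$ at $H = -2L$ and $H = -3L$ separately produces spurious $L^{-2}$ and $L^{-1}$ poles coming from the $Q_0$ and $Q_1$ coefficients of $Q(t) = \sum_a Q_a t^a$, and the purpose of the $\frac{Q(0)}{6L^2}$ correction is precisely to cancel them: the $L^{-2}$ coefficient works out to $Q_0\big(-\tfrac12 + \tfrac13 + \tfrac16\big) = 0$ and the $L^{-1}$ coefficient to $Q_1(1 - 1) = 0$. To close the argument rigorously I would match the resulting power series against the functorial identity $\pi_* H^a = s_{a-2}(\mathscr{E})$ for the Segre class $s(\mathscr{E}) = c(\mathscr{E})^{-1}$, expanding $\frac{1}{(1+2L)(1+3L)} = \sum_k\big((-2)^{k+1} + 3(-3)^k\big)L^k$ and checking that the coefficient of each $Q_a$ with $a \geq 2$ agrees; the vanishing of the $a = 0, 1$ contributions is the same cancellation restated, and it simultaneously confirms the overall normalization and the signs inherited from the convention $c_1(\mathscr{O}_{X_0}(1)) = H$.
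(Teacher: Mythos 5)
Your proposal is correct and follows essentially the same route as the paper: Theorem \ref{Thm:FW} applied to $\mathscr{E}=\mathscr{O}_B\oplus\mathscr{L}^{\otimes 2}\oplus\mathscr{L}^{\otimes 3}$ with the partial-fraction decomposition $\frac{1}{(1+2L)(1+3L)}=\frac{-2}{1+2L}+\frac{3}{1+3L}$, which is precisely how the paper proves the sibling statements (Theorems \ref{lem:PushH1}--\ref{lem:PushH4}); your residue/localization rewriting and the Segre-class check $\pi_*H^a=s_{a-2}(\mathscr{E})$ are equivalent repackagings that correctly account for the $\frac{Q(0)}{6L^2}$ term and the cancellation of the $Q_0$ and $Q_1$ poles.
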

In particular, we have
\begin{equation}
\begin{aligned}
\begin{cases}
\pi_* 1 = 0, \quad \pi_*H= 0, \quad \pi_*H^2= 1 , \quad  \pi_* H^3 = -5 L, \quad\pi_*  H^4= 19 L^2, \quad \pi_* H^5= -65 L^3,\\
\pi_*( H^{k}) =\left[(-2)^{k-1} -(-3)^{k-1} \right] L^{k-2} \quad n\geq 1.
\end{cases}
\end{aligned}
\end{equation}

\begin{thm}\label{lem:PushH1}
Given the projective bundle $ \pi:X_0=
\mathbb{P}(\mathscr{O}_B\oplus\mathscr{L}\oplus \mathscr{L}^{\otimes 2}\oplus \mathscr{L}^{\otimes 2})\rightarrow B$, we 
denoting the first Chern class of $\mathscr{L}$ by $L$ and the first Chern class of the dual of the tautological line bundle of $\pi$ by $H$. We have

\begin{equation}
\begin{aligned}
\pi_* F(H) &=\Big( \frac{F-F_0-F_1 H-F_2 H^2}{H^3}\Big)\Big|_{H=-L}-2\Big(\frac{F-F_0-F_1 H-F_2 H^2}{H^3}\Big)\Big|_{H=-2L} \\
& \  \  +
\partial_H\Big(\frac{F-F_0-F_1 H-F_2 H^2}{H^2}\Big) \Big|_{H=-2L}  .
\end{aligned}
\end{equation}
\end{thm}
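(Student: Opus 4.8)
The plan is to derive Theorem~\ref{lem:PushH1} as a special case of Fullwood's pushforward formula, Theorem~\ref{Thm:FW}, applied to the rank-four bundle $\mathscr{E}=\mathscr{O}_B\oplus\mathscr{L}\oplus\mathscr{L}^{\otimes 2}\oplus\mathscr{L}^{\otimes 2}$, whose projectivization $\pi:X_0=\mathbb{P}(\mathscr{E})\to B$ has fibers $\mathbb{P}^3$ and relative dimension three. The first step is to record the total Chern class: since $c(\mathscr{O}_B)=1$ and the two copies of $\mathscr{L}^{\otimes 2}$ contribute a repeated factor, one has $c(\mathscr{E})=(1+L)(1+2L)^2$. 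The distinct weights that feed Theorem~\ref{Thm:FW} are therefore $r=1$ with multiplicity one and $r=2$ with multiplicity two, and these multiplicities dictate the shape of the answer.

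The second step is the partial-fraction expansion of $1/c(\mathscr{E})$ required by Theorem~\ref{Thm:FW}. Clearing denominators and matching at $L=-1$, $L=-\tfrac12$, and $L=0$ gives
\begin{equation}
\frac{1}{(1+L)(1+2L)^2}=\frac{1}{1+L}-\frac{2}{1+2L}+\frac{2}{(1+2L)^2}.
\end{equation}
In the notation of Theorem~\ref{Thm:FW} this reads off as $q_{11}=1$ attached to $r_1=1$, and $q_{21}=-2$, $q_{22}=2$ attached to $r_2=2$. The simple factor $1/(1+L)$ will produce a single evaluation at $H=-L$, while the squared factor $1/(1+2L)^2$ is what forces the derivative term: the index $j$ now runs up to $2$, and the $j=2$ contribution comes multiplied by $\tfrac{1}{(j-1)!}\partial_H^{j-1}=\partial_H$ and evaluated at $H=-2L$, carrying the coefficient $q_{22}$.

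The final step is to substitute these data into the master formula. The truncation $F-F_0-F_1H-F_2H^2$ appearing in Theorem~\ref{lem:PushH1} is precisely the one prescribed by Theorem~\ref{Thm:FW}; its role is to annihilate the pole that the $\mathscr{O}_B$ summand would otherwise contribute at $H=0$, which is legitimate because $\pi_*H^k=0$ for $k<3$ and hence $\pi_*(F_0+F_1H+F_2H^2)=0$. Assembling the $r_1=1$ evaluation at $H=-L$ with the two $r_2=2$ contributions at $H=-2L$ then yields the stated three-term expression. I expect the only delicate point to be the bookkeeping of the powers of $H$ in the denominators and, in particular, the numerical coefficient carried into the derivative term by $q_{22}$ together with the factor $1/(j-1)!$; the cleanest safeguard against an off-by-one error here is to test the assembled formula on the monomials $F=H^3,H^4,H^5$ and to check that it reproduces the Segre values $\pi_*H^3=1$, $\pi_*H^4=-5L$, $\pi_*H^5=17L^2$ obtained directly from $s(\mathscr{E})=1/c(\mathscr{E})=1-5L+17L^2-\cdots$.
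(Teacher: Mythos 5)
Your proposal is correct and follows essentially the same route as the paper: the paper's own proof consists precisely of invoking Theorem~\ref{Thm:FW} with the partial fraction decomposition $\tfrac{1}{(1+L)(1+2L)^2}=\tfrac{1}{1+L}-\tfrac{2}{1+2L}+\tfrac{2}{(1+2L)^2}$ and checking the Segre series $1-5L+17L^2-49L^3+\cdots$. However, the ``delicate point'' you flag is not hypothetical, and your monomial safeguard is essential rather than optional. Assembling the terms with the coefficient $q_{22}=2$ (and $\tfrac{1}{(j-1)!}=1$ for $j=2$), exactly as your plan prescribes, yields
\begin{equation}
\pi_*F=\Big(\frac{G}{H^3}\Big)\Big|_{H=-L}-2\Big(\frac{G}{H^3}\Big)\Big|_{H=-2L}+2\,\partial_H\Big(\frac{G}{H^2}\Big)\Big|_{H=-2L},\qquad G=F-F_0-F_1H-F_2H^2,
\end{equation}
in which the derivative term carries a factor $2$, whereas the statement as printed carries a factor $1$. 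Your test decides the matter: for $F=H^3$ the printed formula gives $1-2+1=0$ instead of $\pi_*H^3=1$, and for $F=H^4$ it gives $-L+4L-4L=-L$ instead of $-5L$; both come out correctly only with the factor $2$. So your derivation, carried to completion, proves a corrected form of the statement---the theorem as printed has a typo in that coefficient (and, consistently, the closed form displayed after it should read $\pi_*H^{3+k}=(-L)^k-2(-2L)^k+2(k+1)(-2L)^k$, as does the intermediate expansion in the paper's proof, whose final series $1-5L+17L^2-\cdots$ is nonetheless correct). The same monomial test also settles your worry about the powers of $H$: Theorem~\ref{Thm:FW} as printed has denominator $H^{n-j}$, which for rank four ($n=3$) would give $H^2$ and $H^1$; matching $\pi_*H^3=1$, $\pi_*H^4=-5L$, $\pi_*H^5=17L^2$ forces $H^{n+1-j}$, i.e.\ the denominators $H^3$ and $H^2$ appearing in the statement.
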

In particular, 
\begin{equation}
\begin{cases}
\pi_* 1=0, \quad  \pi_* H =0, \quad \pi_* H=0, \quad \pi_* H^2=0, \quad \pi_* H^3=1,\\
\pi_* H^4=- 5 L, \quad \pi_* H^5= 17 L^2, \quad \pi_* H^6=- 49 L^3, \quad \pi_* H^7=129 L^4, \quad H^7= - 321 L^5, %+ 769 L^6 -+\cdots
\\
 H^{3+k}= (-L)^k -2 (-2L)^k -(k+1) (-2L)^{k}.
 \end{cases}
\end{equation}
\begin{proof}
We use Theorem \ref{Thm:FW} with the  partial fraction decomposition :
\begin{equation}
\frac{1}{(1+L)(1+2L)^2}=\frac{1}{1+L}-\frac{2}{1+2 L}+\frac{2}{(1+2 L)^2}.\nonumber
\end{equation}
The first few terms and the generic term can be computed directly as follows
\begin{equation}
\begin{aligned}\nonumber
\frac{1}{(1+L)(1+2L)^2}&=\frac{1}{1+L}-\frac{2}{1+2 L}-\partial_L\Big(\frac{1}{1+2 L}\Big) \\
&= \sum_{k=0}^\infty\Big( (-L)^k  -2 (-2L)^k     - 2(k+1) (-2L)^k\Big)\\
&=1 - 5 L + 17 L^2 - 49 L^3 + 129 L^4 - 321 L^5 + 769 L^6 +\cdots .
\end{aligned}
\end{equation}
\end{proof}

\begin{thm}\label{lem:PushH2}
Given the projective bundle $ \pi:X_0=\mathbb{P}(\mathscr{O}_B\oplus\mathscr{L}\oplus \mathscr{L}^{\otimes 2})
\rightarrow B$.
Denoting the first Chern class of $\mathscr{L}$ by $L$ and the first Chern class of the dual of the tautological line bundle of $\pi$ by $H$, we have:
\begin{equation}
\begin{aligned}
\pi_* F(H) &=-\Big( \frac{F-F_0-F_1 H}{H^2}\Big)\Big|_{H=-L}+2\Big(\frac{F-F_0-F_1 H}{H^2}\Big)\Big|_{H=-2L}  .
\end{aligned}
\end{equation}
\end{thm}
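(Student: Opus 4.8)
The plan is to specialize Theorem~\ref{Thm:FW} to the rank-three bundle $\mathscr{E}=\mathscr{O}_B\oplus\mathscr{L}\oplus\mathscr{L}^{\otimes 2}$, in exactly the same way that Theorem~\ref{lem:PushH1} was obtained. First I would record that the $\mathscr{O}_B$ summand contributes trivially to the total Chern class, so that
\[
c(\mathscr{E})=(1+L)(1+2L),
\]
and hence, in the notation of Theorem~\ref{Thm:FW}, the reciprocal $1/c(\mathscr{E})$ has two \emph{simple} poles, located at $H=-L$ and $H=-2L$ (the roots $r_1=1$ and $r_2=2$). Geometrically $\pi$ is a $\mathbb{P}^2$-bundle, so the relative dimension is $2$ and the lowest surviving power is $H^2$ with $\pi_* H^2=1$; this is what fixes the denominator $H^2$ and the truncation up to $F_1 H$ in the final formula.

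Next I would compute the partial fraction decomposition
\[
\frac{1}{(1+L)(1+2L)}=\frac{-1}{1+L}+\frac{2}{1+2L},
\]
so that the coefficients entering Theorem~\ref{Thm:FW} are $q_{11}=-1$ and $q_{21}=2$, each attached to a pole of multiplicity one. Because both poles are simple, the inner sum over $j$ in Theorem~\ref{Thm:FW} collapses to its $j=1$ term and no $H$-derivatives survive; this is precisely what makes the present statement shorter than Theorem~\ref{lem:PushH1}, where the double root forced a $\partial_H$-term. Substituting $q_{11},q_{21}$ and the two roots, with the truncation $F-F_0-F_1H$ and denominator $H^2$ dictated by the relative dimension, yields directly
\[
\pi_* F(H)=-\Big(\frac{F-F_0-F_1H}{H^2}\Big)\Big|_{H=-L}+2\Big(\frac{F-F_0-F_1H}{H^2}\Big)\Big|_{H=-2L}.
\]

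As a consistency check I would expand $1/c(\mathscr{E})=\sum_{k\geq 0}(-1)^k(2^{k+1}-1)L^k=1-3L+7L^2-\cdots$; these coefficients are the Segre classes and must equal $\pi_*H^{2+k}$, so the formula should return $\pi_*H^2=1$, $\pi_*H^3=-3L$, $\pi_*H^4=7L^2$, and $\pi_*H^j=0$ for $j<2$, all of which are immediate to verify on monomials. Since the whole argument is a direct specialization of Theorem~\ref{Thm:FW}, I do not expect a genuine obstacle; the only points demanding care are the signs in the partial fraction expansion and the bookkeeping of the index $n$ together with the matching powers of $H$ in the truncation and denominator.
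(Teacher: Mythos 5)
Your proposal is correct and matches the paper's proof: both apply Theorem~\ref{Thm:FW} with the partial fraction decomposition $\tfrac{1}{(1+L)(1+2L)}=-\tfrac{1}{1+L}+\tfrac{2}{1+2L}$, noting that the simple poles eliminate any derivative terms. Your added consistency check against the Segre class expansion $1-3L+7L^2-\cdots$ is exactly the verification the paper records after the theorem.
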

In particular,
\begin{equation}
\begin{cases}
\pi_*  1=0, \quad \pi_* H =0, \quad  \pi_* H^2=1, \\
\pi_* H^3= - 3 L, \quad \pi_* H^4= + 7 L^2, \quad \pi_* H^5 =- 15 L^3, \quad \pi_* H^6= + 31 L^4, \quad \pi_* H^7= - 63 L^5,
\\
 \pi_* H^{k+2}= -(-L)^k  +2 (-2L)^k \quad (k>0).
 \end{cases}
\end{equation}

\begin{proof}
We use Theorem \ref{Thm:FW} with the  partial fraction decomposition:
$$
\begin{aligned}
\frac{1}{(1+L)(1+2L)}&=-\frac{1}{1+L}+\frac{2}{1+2 L}\\
&= \sum_{k=0}^\infty\Big( -(-L)^k  +2 (-2L)^k  \Big)\\
&=1 - 3 L + 7 L^2 - 15 L^3 + 31 L^4 - 63 L^5 + 127 L^6 +\cdots .
\end{aligned}
$$
\end{proof}

\begin{thm}\label{lem:PushH3}
Given the projective bundle $ \pi:X_0=\mathbb{P}(\mathscr{O}_B\oplus\mathscr{L}\oplus \mathscr{L}) \rightarrow B$. Denoting the first Chern class of $\mathscr{L}$ by $L$ and the first Chern class of the dual of the tautological line bundle of $\pi$ by $H$, we have:
\begin{equation}
\pi_* F=\Big( \partial_L \frac{F-F_0-F_1 H}{H^2}\Big) \Big|_{H=-L} .
\end{equation}
\end{thm}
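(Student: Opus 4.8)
The plan is to apply Theorem \ref{Thm:FW} to the rank-three bundle $\mathscr{E}=\mathscr{O}_B\oplus\mathscr{L}\oplus\mathscr{L}$, following verbatim the strategy already used for Theorems \ref{lem:PushH1} and \ref{lem:PushH2}. First I would record the total Chern class $c(\mathscr{E})=(1+L)^2$, so that the Segre series fed into Theorem \ref{Thm:FW} is
\[
\frac{1}{c(\mathscr{E})}=\frac{1}{(1+L)^2}.
\]
In the notation of Theorem \ref{Thm:FW} this corresponds to $n=2$ together with a single root $r_1=1$ of multiplicity $n_1=2$. Consequently the terms to be subtracted are those of degree strictly less than $n=2$, namely $F_0+F_1H$, which is exactly the combination $F-F_0-F_1H$ appearing in the statement, and the relevant power of $H$ under the residue is $H^2$.

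The essential feature is that the two factors of $c(\mathscr{E})$ coincide, so — unlike Theorem \ref{lem:PushH2}, where the two \emph{distinct} roots $r=1,2$ produce two separate evaluations — here we sit in the confluent case of a single double pole. The clean way to handle this is to express the double pole as a derivative of a simple one,
\[
\frac{1}{(1+L)^2}=-\,\partial_L\frac{1}{1+L},
\]
precisely as was done for the $(1+2L)^{-2}$ factor in the proof of Theorem \ref{lem:PushH1}. Passing through Theorem \ref{Thm:FW}, the $q_{1,1}$-contribution drops out and only the order-two part survives, yielding a single derivative term evaluated at the single point $H=-L$. This is the origin of the lone $\partial$ and of the unique substitution $H=-L$ in the claimed formula, in contrast to the two-point evaluations of Theorem \ref{lem:PushH2}.

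I would then confirm the outcome by expanding the Segre series as a power series in $L$,
\[
\frac{1}{(1+L)^2}=\sum_{k=0}^{\infty}(k+1)(-L)^k=1-2L+3L^2-4L^3+\cdots,
\]
and reading off $\pi_*H^{2+k}=(k+1)(-L)^k$, i.e. $\pi_*1=\pi_*H=0$, $\pi_*H^2=1$, $\pi_*H^3=-2L$, $\pi_*H^4=3L^2$, and so on. This expansion is the decisive consistency check: it independently fixes all the pushforwards $\pi_*H^j$ and must agree with the closed form applied to $F=H^j$.

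The main obstacle is bookkeeping rather than conceptual. One must track the confluent limit with care: which power of $H$ sits beneath the derivative, the fact that only a single evaluation at $H=-L$ (not two) survives, and that exactly the degree-zero and degree-one parts $F_0+F_1H$ are removed before dividing. These are precisely the places where an off-by-one in the exponent or in the number of subtracted terms would slip in, and the power-series expansion above is the safeguard that pins down each of these choices unambiguously.
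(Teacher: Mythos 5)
Your overall strategy coincides with the paper's own proof: recognize $1/c(\mathscr{E})=1/(1+L)^2$ as the confluent case of Theorem \ref{Thm:FW}, rewrite the double pole as $-\partial_L\tfrac{1}{1+L}$, and expand the Segre series to read off $\pi_* H^{2+k}=(k+1)(-L)^k$; the paper's proof consists of exactly these two observations and nothing more.

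The gap is in the step where you claim that Theorem \ref{Thm:FW} delivers the displayed closed form, with ``$H^2$ under the residue'' and a lone derivative, and then assert (without carrying it out) that this closed form agrees with the monomial values. It does not: the consistency check you advertise as your safeguard actually fails for the formula as stated. Taking the statement literally, $F=H^2$ gives $\partial_L\big(H^2/H^2\big)\big|_{H=-L}=\partial_L(1)=0\neq 1=\pi_*H^2$; substituting $H=-L$ \emph{before} differentiating gives $\partial_L\big((-L)^k\big)=k(-1)^kL^{k-1}$ for $F=H^{2+k}$, also wrong. What the Fullwood-type pushforward actually produces for the surviving $(i,j)=(1,2)$ term is one $H$-derivative of $(F-F_0-F_1H)/H^{\,n-j+1}$ with $n=j=2$, i.e.
\begin{equation*}
\pi_*F=\partial_H\Big(\frac{F-F_0-F_1H}{H}\Big)\Big|_{H=-L},
\end{equation*}
which does pass the check: $\partial_H H^{k+1}\big|_{H=-L}=(k+1)(-L)^k$. (Equivalently, by the chain rule, apply $-\partial_L$ to $(F-F_0-F_1H)/H$ \emph{after} setting $H=-L$.) So the $\partial_L$ and the $H^2$ in the stated equation are typos---just as the exponent $H^{n-j}$ in the paper's Theorem \ref{Thm:FW} must read $H^{\,n-j+1}$ to be consistent with Theorem \ref{lem:PushH2}---and your derivation, by reverse-engineering the printed formula rather than computing the term, inherits the error at precisely the ``off-by-one in the exponent'' spot you yourself flag as the danger. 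The Segre-expansion half of your argument, which is what all subsequent computations in the paper actually use, is correct and is the same as the paper's.
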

In particular, we have 
\begin{equation}
\begin{cases}
\pi_*  1=0, \quad \pi_* H =0, \quad  \pi_* H^2=1, \\
\pi_* H^3=- 2 L, \quad \pi_* H^4=+ 3 L^2, \quad \pi_* H^5= - 4 L^3, \quad \pi_* H^6=+ 5 L^4, \quad \pi_* H^7= - 6 L^5, \\
 \pi_* H^{k+2}= (k+1) (-L)^k \quad (k>0).
 \end{cases}
\end{equation}
\begin{proof}
The result for $\pi_* F$ follows directly from  Theorem \ref{Thm:FW}. 
We can compute the first few terms by the following expansion:
$$
\begin{aligned}
\frac{1}{(1+L)^2}&=-\partial_L\frac{1}{1+L}=-\partial_L\Big(\sum_{k=0}^\infty (-L)^k\Big)\\
&=1 - 2 L + 3 L^2 - 4 L^3 + 5 L^4 - 6 L^5 + 7 L^6 +\cdots .
\end{aligned}
$$
\end{proof}

\begin{thm}\label{lem:PushH4}
Given the projective bundle $ \pi:X_0=\mathbb{P}(\mathscr{O}_B\oplus\mathscr{L}\oplus \mathscr{L}\oplus\mathscr{L})
\rightarrow B$.
Denoting the first Chern class of $\mathscr{L}$ by $L$ and the first Chern class of the dual of the tautological line bundle of $\pi$ by $H$, we have:
\begin{equation}
\pi_* F=\frac{1}{2}\Big( \partial^2_L \frac{F-F_0-F_1 H-F_2 H^2}{H}\Big)_{H=-L}.
\end{equation}
\end{thm}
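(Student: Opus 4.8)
The plan is to specialize the Fullwood pushforward formula of Theorem \ref{Thm:FW} to the bundle $\mathscr{E}=\mathscr{O}_B\oplus\mathscr{L}\oplus\mathscr{L}\oplus\mathscr{L}$, following verbatim the template of the proofs of Theorems \ref{lem:PushH2} and \ref{lem:PushH3}. Since $\mathscr{E}$ has rank $n+1=4$, we have $n=3$ and $r_1=r_2=r_3=1$ in the notation of Theorem \ref{Thm:FW}, so the Segre generating function is
\[
\frac{1}{c(\mathscr{E})}=\frac{1}{(1+L)^3}.
\]
Unlike the cases treated before, this rational function has no nontrivial partial fraction decomposition: it is already a single term with one pole, now of order three. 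In Fullwood's notation this is the degenerate case $m=1$, $r_1=1$, $n_1=3$, with $q_{13}=1$ and $q_{11}=q_{12}=0$, so that only the $j=3$ summand of Theorem \ref{Thm:FW} survives.

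The conceptual heart of the argument is to handle this third-order pole exactly as the second-order pole was handled in Theorem \ref{lem:PushH3}, by trading the repeated pole for derivatives in $L$ of the simple-pole kernel. There one uses $\tfrac{1}{(1+L)^2}=-\partial_L\tfrac{1}{1+L}$; here the analogous identity is
\[
\frac{1}{(1+L)^3}=\frac{1}{2}\,\partial_L^2\,\frac{1}{1+L},
\]
which is the $p=2$ instance of $\tfrac{1}{(1+L)^{p+1}}=\tfrac{(-1)^p}{p!}\,\partial_L^p\,\tfrac{1}{1+L}$. Feeding this into Theorem \ref{Thm:FW} produces precisely the prefactor $\tfrac{1}{(j-1)!}=\tfrac12$ and the second derivative appearing in the statement, evaluated after the substitution $H=-r_1L=-L$; the truncation $F-F_0-F_1H-F_2H^2$ is exactly the removal of the powers of $H$ of degree below $n=3$, which push forward to zero. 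This yields the displayed formula.

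For completeness, and matching the ``in particular'' tables of the previous theorems, I would read off the Segre classes from the same generating function. Expanding
\[
\frac{1}{(1+L)^3}=\frac12\,\partial_L^2\sum_{k\ge 0}(-L)^k=\sum_{k\ge 0}\binom{k+2}{2}(-L)^k=1-3L+6L^2-10L^3+15L^4-\cdots,
\]
gives $\pi_*H^{k+3}=\binom{k+2}{2}(-L)^k$ for $k\ge 0$, while $\pi_*H^m=0$ for $m<3$.

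I do not expect a genuine obstacle: the statement is a direct specialization of Theorem \ref{Thm:FW}, and the only delicate point is bookkeeping. Concretely, one must check that the single triple-pole residue at $H=-L$, after the polynomial part $F_0+F_1H+F_2H^2$ has been subtracted and a factor $1/H$ extracted, correctly absorbs the contribution of the trivial summand $\mathscr{O}_B$ — equivalently, the residue $F_0/L^3$ at $H=0$ of $F/\bigl(H(H+L)^3\bigr)$. Verifying this against the monomials $F=H^m$ using the Segre expansion above is the cleanest sanity check and completes the argument.
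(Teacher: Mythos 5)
Your proposal is correct and takes essentially the same route as the paper's own proof: the paper also obtains the formula as a direct specialization of Theorem \ref{Thm:FW} to $\frac{1}{c(\mathscr{E})}=\frac{1}{(1+L)^3}$, using exactly the identity $\frac{1}{(1+L)^3}=\frac{1}{2}\partial_L^2\frac{1}{1+L}$ and the resulting expansion $1-3L+6L^2-10L^3+\cdots$ to read off the values $\pi_*H^{k+3}=\binom{k+2}{2}(-L)^k$. Your closing sanity check against the monomials $F=H^m$ is an addition the paper does not include, but it does not alter the method.
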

In particular, we have, 
\begin{equation}
\begin{cases}
\pi_*  1=0, \quad \pi_* H =0, \quad  \pi_* H^2=0, \quad \pi_* H^3=1\\
\pi_* H^4=-3 L,\quad \pi_* H^4=+6 L^2,\quad \pi_* H^5=-10 L^3, \quad \pi_* H^6=15 L^4, \quad \pi_* H^7=-21 L^5,\\
 \pi_* H^{k+3}= \frac{1}{2}(k+1)(k+2) (-L)^k \quad (k>0).
 \end{cases}
\end{equation}\begin{proof}
The result for $\pi_* F$ follows directly from  Theorem \ref{Thm:FW}. 
We can compute the first few terms by the following expansion:
$$
\begin{aligned}
\frac{1}{(1+L)^3}&=\frac{1}{2}\partial^2_L\frac{1}{(1+L)}=\frac{1}{2}\partial_L^2  \sum_{k=0}^\infty (-L)^k\\
& =1-3 L+6 L^2-10 L^3+15 L^4-21 L^5+28 L^6+\cdots .
\end{aligned}
$$
\end{proof}

\section{Collection of results} \label{sec:results}

The following theorem gives the behaviors of intersection numbers involving Chern classes and Pontryagin classes of dimension too small to give Chern or Pontryagin numbers \cite{Chern}. 
To give a number, they must be multiplied by an element of the Chow ring of appropriate dimension.

\begin{thm}\label{thm.1.4}
Let  $\varphi:Y\longrightarrow B$ be an  elliptic fibration
given by the crepant resolution of a singular Weierstrass model of dimension $n$. Then,
\begin{align}
& \int_Y c_1^i(TY)\cdot  \alpha &&=\int_B (c_1-L)^i\cdot \varphi_* \alpha, \quad &&\alpha \in A^*(Y),\label{Thm21.1}  \\
& \int_Y c_1^i(TY)\cdot  \varphi^* \beta &&=\int_B (c_1-L)^i \beta,  &&\beta \in A^*(B), \label{Thm21.2} \\
& \int_Y c_1^n(TY) &&=\int_B (c_1-L)^n=0, \label{Thm21.3}\\
& \int_Y c_3(TY)\cdot  \varphi^* \beta  &&=\int_B \varphi_* (c_3 (TY) [Y]) \cdot \beta, \quad &&\beta \in A^*(B), \label{Thm21.4}\\
& \int_Y c_2(TY)\cdot  \varphi^* \beta  &&=12\int_B L \cdot \beta, \quad &&\beta \in A^*(B), \label{Thm21.5}\\
& \int_Y p_1(TY)\cdot  \varphi^* \beta  &&=
 \int_B(c_1-L)^2 \beta-24\int_B L\cdot \beta, \quad &&\beta \in A^*(B),\label{Thm21.6}
\end{align}
where $ \varphi_* (c_3 (TY) [Y])$ are given in Table \ref{Table:ChernNumbers3} and Table \ref{Table:ChernNumbersG}.
\end{thm}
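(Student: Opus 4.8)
The plan is to reduce all six identities to a single geometric input: for any crepant resolution $\varphi\colon Y\to B$ of a Weierstrass model with fundamental line bundle $\mathscr{L}$, the first Chern class of $Y$ is pulled back from the base,
\[
c_1(TY)=\varphi^*(c_1-L),\qquad c_1=c_1(TB),\quad L=c_1(\mathscr{L}).
\]
First I would establish this on the ambient bundle $X_0=\mathbb{P}(\mathscr{O}_B\oplus\mathscr{L}^{\otimes 2}\oplus\mathscr{L}^{\otimes 3})$. The relative Euler sequence gives $c_1(TX_0)=\pi^*c_1+3H+5\pi^*L$, and the smooth Weierstrass hypersurface $W$ has class $3H+6\pi^*L$; adjunction $c_1(TW)=(c_1(TX_0)-[W])|_W$ then collapses to $\varphi^*(c_1-L)$. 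Writing the resolution as $Y\xrightarrow{\rho}W\xrightarrow{\pi_W}B$, crepancy gives $K_Y=\rho^*K_W$, so the anticanonical class pulls back and the displayed relation holds on all of $Y$, singular Weierstrass models included.

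Given this relation, \eqref{Thm21.1}, \eqref{Thm21.2}, \eqref{Thm21.3} and \eqref{Thm21.4} are formal. For \eqref{Thm21.1} I substitute $c_1(TY)^i=\varphi^*(c_1-L)^i$ and apply the projection formula together with Lemma \ref{lem:Push}, giving $\int_Y\varphi^*(c_1-L)^i\cdot\alpha=\int_B(c_1-L)^i\cdot\varphi_*\alpha$. Identity \eqref{Thm21.2} is the specialization $\alpha=\varphi^*\beta$, recording that base classes are paired through $c_1(TY)^i\cdot\varphi^*\beta=\varphi^*\big((c_1-L)^i\beta\big)$. Identity \eqref{Thm21.3} is the case $i=n$: since $\dim B=n-1$ one has $(c_1-L)^n\in A^n(B)=0$, hence $\int_Yc_1^n=\int_Y\varphi^*(c_1-L)^n=0$. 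Finally \eqref{Thm21.4} is nothing but the projection formula applied to the class $c_3(TY)\cap[Y]$; the equality is automatic, and all of its content sits in the tabulated value of $\varphi_*(c_3(TY)[Y])$.

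The substantive new ingredient is \eqref{Thm21.5}, i.e. $\varphi_*\big(c_2(TY)\big)=12L$. I would prove this by pairing with a base class $\beta$ of complementary dimension: Poincar\'e-dual to a generic complete-intersection curve $C\subset B$ meeting the discriminant transversally, the restriction $S=\varphi^{-1}(C)$ is a smooth relatively minimal elliptic surface, and the canonical bundle formula gives $\int_S c_2(TS)=12\deg(\mathscr{L}|_C)=12\int_B L\cdot\beta$, independently of the resolution. The adjunction corrections relating $c_2(TS)$ to $c_2(TY)|_S$ all pull back classes of codimension $\ge 2$ from the curve $C$ and therefore vanish, so $\int_Y c_2(TY)\cdot\varphi^*\beta=\int_S c_2(TY)|_S=\int_S c_2(TS)$; alternatively one computes $c_2(TW)$ by adjunction and pushes forward with Theorem \ref{Thm:PushH} using $\pi_*H^2=1$ and $\pi_*H^3=-5L$. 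Then \eqref{Thm21.6} follows at once from $p_1(TY)=c_1^2(TY)-2c_2(TY)$ by subtracting twice \eqref{Thm21.5} from \eqref{Thm21.2} with $i=2$, yielding $\int_B(c_1-L)^2\beta-24\int_B L\cdot\beta$.

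The hard part is the explicit value of $\varphi_*(c_3(TY)[Y])$ feeding \eqref{Thm21.4}. Unlike $c_1(TY)$, which is a pullback, and $c_2(TY)$, whose pushforward is the universal multiple $12L$, the third Chern class is genuinely sensitive to the particular model and its crepant resolution. Computing it means tracking $c_3$ through the defining sequence of blowups via Aluffi's blowup formula (Theorem \ref{Thm:AluffiCBU}) and the exceptional-divisor pushforwards (Lemma \ref{Thm:PushE}, Theorem \ref{Thm:Push}), and finally descending to $B$ with Theorem \ref{Thm:PushH} and its analogues. This model-by-model computation is exactly what Table \ref{Table:ChernNumbers3} and Table \ref{Table:ChernNumbersG} record, and it is where essentially all the labor of the theorem resides.
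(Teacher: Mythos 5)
Your proposal follows the same skeleton as the paper's proof: everything is reduced to the single input $c_1(TY)=\varphi^*(c_1-L)$, after which \eqref{Thm21.1}--\eqref{Thm21.4} are the projection formula plus Lemma \ref{lem:Push}, \eqref{Thm21.3} comes from $\dim B=n-1$, and \eqref{Thm21.6} is the linear combination $p_1=c_1^2-2c_2$ of \eqref{Thm21.2} and \eqref{Thm21.5}. You go beyond the paper in one welcome respect: you actually prove the pullback formula for $c_1(TY)$ (relative Euler sequence and adjunction on the ambient bundle for the Weierstrass hypersurface, then crepancy $K_Y=\rho^*K_W$), whereas the paper simply asserts it. The genuine divergence is \eqref{Thm21.5}. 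The paper's proof is a case-by-case verification: $\varphi_*\bigl(c_2(TY)\cap[Y]\bigr)$ is computed for each model through its resolution using Theorems \ref{Thm:AluffiCBU}, \ref{Thm:Push}, \ref{Thm:PushH} and their analogues, and the answer is observed to be $12L$ every time. Your argument is instead uniform and geometric: restrict to a generic complete-intersection curve $C\subset B$, identify $S=\varphi^{-1}(C)$ as a smooth relatively minimal elliptic surface, and use $\chi_{\mathrm{top}}(S)=\deg(\Delta|_C)=12\deg(\mathscr{L}|_C)$. This explains \emph{why} the answer is universal across crepant resolutions rather than checking it model by model, which is a real conceptual gain. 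One precision point: the adjunction corrections are $c_1(TS)\cdot\varphi^*c_1(N_{C/B})$ and $\varphi^*c_2(N_{C/B})$; the second vanishes because $A^2(C)=0$, but the first is the pullback of a \emph{codimension-one} class on $C$, and it vanishes only because $c_1(TS)$ is itself a pullback from $C$ (by crepancy and the canonical bundle formula), making the product the pullback of a class in $A^2(C)$. As phrased (``all pull back classes of codimension $\geq 2$'') your justification does not literally cover this cross term; it should be made explicit.

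There is also a gap in generality you should flag or repair. Equation \eqref{Thm21.5} is asserted for every $\beta\in A^*(B)$, while your curve argument reaches only those $\beta$ representable by generic complete intersections of ample divisors --- hence, by multilinearity, any product of divisor classes --- but not an arbitrary curve class (for instance $\beta=c_2(TB)$ on a threefold base need not be such a product). The paper's computation yields the stronger Chow-level identity $\varphi_*\bigl(c_2(TY)\cap[Y]\bigr)=12L$, which disposes of all $\beta$ at once. Your parenthetical alternative (compute $c_2$ of the smooth Weierstrass model by adjunction and push forward with Theorem \ref{Thm:PushH}) does not close this gap by itself either: it treats only the smooth Weierstrass model, and tracking $c_2$ through the blowups of each singular model is precisely the computation the paper performs. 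So either incorporate that class-level computation, or add an argument upgrading your numerical statement to arbitrary $\beta$.
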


\begin{proof}
Since $Y$ is a crepant resolution of a Weierstrass model, we have that the first Chern class of $Y$ is the pullback $c_1(TY)=\varphi^*(c_1-L)$.
Equation \eqref{Thm21.1} is therefore a direct consequence of the projection formula and the invariance of the degree under a proper map: 
\begin{equation}
\int_Y \varphi^* (c_1-L)^i \alpha =\int_B \varphi_* (\varphi^* (c_1-L)^i \alpha)= \int_B (c_1-L)^i \varphi_*\alpha.
\end{equation}
Equations \eqref{Thm21.2} and \eqref{Thm21.3} are direct specializations of equation \eqref{Thm21.1}. 
In particular, if $Y$ is an $n$-fold, we have $\int_Y \varphi^* (c_1-L)^n=\int_B (c_1-L)^n=0$. 
 Equation \eqref{Thm21.4} is also a direct consequence of the projection formula.
 Equation \eqref{Thm21.5} follows from direct computation of the pushforward of $c_2$ for each models. The answer is always $12L$ as in the case of a smooth Weierstrass model. 
 Equation \eqref{Thm21.6}  is derived by linear combination using $p_1=c_1^2-2c_2$ and equations \eqref{Thm21.2} and \eqref{Thm21.5}.
  \end{proof}

  \begin{cor}\label{Cor.c_2}
  Let  $\varphi:Y\longrightarrow B$ be an  elliptic fibration
given by the crepant resolution of a singular Weierstrass model of dimension $n$ with fundamental line bundle $\mathscr{L}$. Then,
\begin{equation}
  \int_Y c_1(TY)^{n-1}c_2(TY) = 12\int_B (c_1-L)^{n-1} L.
 \end{equation}
  \end{cor}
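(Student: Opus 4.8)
The plan is to derive this as a direct specialization of Theorem~\ref{thm.1.4}. The one fact I need is the one established at the start of the proof of that theorem: because $Y$ is a crepant resolution of a Weierstrass model, its first Chern class is a pullback from the base, $c_1(TY)=\varphi^*(c_1-L)$. Raising this to the $(n-1)$-st power shows that $c_1(TY)^{n-1}=\varphi^*\big((c_1-L)^{n-1}\big)$ is again a pullback, so it has the form $\varphi^*\beta$ with $\beta=(c_1-L)^{n-1}\in A^*(B)$. This is the only structural observation required; everything else is a substitution.

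With $c_1(TY)^{n-1}$ exhibited as a pullback, I would invoke equation~\eqref{Thm21.5}, which asserts $\int_Y c_2(TY)\cdot\varphi^*\beta=12\int_B L\cdot\beta$ for every $\beta\in A^*(B)$. Substituting $\beta=(c_1-L)^{n-1}$ and rewriting the left-hand side via $\varphi^*\beta=c_1(TY)^{n-1}$ gives
\[
\int_Y c_1(TY)^{n-1}c_2(TY)=\int_Y c_2(TY)\cdot\varphi^*\big((c_1-L)^{n-1}\big)=12\int_B(c_1-L)^{n-1}L,
\]
which is exactly the claimed identity. No further manipulation is needed, since the projection formula used to justify \eqref{Thm21.5} already absorbs the pairing with the pulled-back factor.

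There is essentially no obstacle here: the corollary is an immediate consequence of \eqref{Thm21.5} once $c_1(TY)^{n-1}$ is recognized as a pullback of a class on $B$. All of the genuine work is hidden inside \eqref{Thm21.5} itself, whose proof rests on the model-by-model verification that $\varphi_*\big(c_2(TY)\cap[Y]\big)=12L$, precisely as for a smooth Weierstrass model. The purpose of isolating this corollary is presentational: it displays $\int_Y c_1^{n-1}c_2$ purely in terms of base data and the fundamental line bundle $\mathscr{L}$, making manifest that this mixed intersection number is insensitive to the particular crepant resolution and coincides with the value computed directly on a Weierstrass model sharing the same $\mathscr{L}$.
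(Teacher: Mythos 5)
Your proposal is correct and matches the paper's intended argument: the corollary is an immediate specialization of Theorem~\ref{thm.1.4}, obtained by writing $c_1(TY)^{n-1}=\varphi^*\big((c_1-L)^{n-1}\big)$ (using $c_1(TY)=\varphi^*(c_1-L)$ for a crepant resolution of a Weierstrass model) and then applying equation~\eqref{Thm21.5} with $\beta=(c_1-L)^{n-1}$. The heavy lifting indeed resides in the model-by-model verification that $\varphi_*\big(c_2(TY)\cap[Y]\big)=12L$, exactly as you note.
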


Below are tables summarizing all the results. 

The E$_8$, E$_7$, E$_6$, D$_5$, and Q$_7$-models are considered in Tables \ref{Table:ChernNumbers3}--\ref{Table.Pontryagin2}. The pushforwards of the third Chern classes are given in Table \ref{Table:ChernNumbers3}, the Chern numbers of the elliptically fibered fourfolds are given in Table \ref{Table.ChernNumbers}, the holomorphic genera of the fourfolds are given in Table \ref{Table.HolomorphicEC}, the Pontryagin numbers are given in Table \ref{Table.Pontryagin}, and the Hirzebruch signatures $\sigma$, the A-genus $\hat{A}_2$, and the curvature invariants $X_8$ are given in Table \ref{Table.Pontryagin2}.

The $G$-models for $G=\text{SO}(n)$ for $n=3,4,5,6$ and $G=\text{PSU}(3)$ are considered in Tables \ref{tab:blowupcenters}--\ref{Table.Pontryagin2G}. The sequence of blowups used to describe a crepant resolution for each $G$-model is summarized in Table \ref{tab:blowupcenters}, the pushforwards of the third Chern classes are given in Table \ref{Table:ChernNumbersG}, the Chern numbers of the elliptically fibered fourfolds are given in Table \ref{Table.ChernNumbersG}, the holomorphic genera of the fourfolds are given in Table \ref{Table.HolomorphicECG}, the Pontryagin numbers are given in Table \ref{Table.PontryaginG}, and the Hirzebruch signatures $\sigma$, the A-genus $\hat{A}_2$, and the curvature invariants $X_8$ are given in Table \ref{Table.Pontryagin2G}.

\begin{table}[hb!]
\begin{center}
\renewcommand{\arraystretch}{2}
\scalebox{1}{ $
\begin{array}{|c|c|}
\hline 
\text{Type} & \varphi_* \Big(c_3(TY)[Y]\Big) \\
\hline
\text{Q}_7 & 6 \left(2 c_1 L-6 L^2+L S-S^2\right) \\
\hline
\text{D}_5 & 4 L (3 c_1 - 7 L) \\
\hline 
\text{E}_6  & 12 L (c_1 - 3 L) \\
\hline 
\text{E}_7 & 12 L (c_1 - 4 L) \\
\hline 
\text{E}_8 & 12 L(c_1 - 6 L) \\
\hline 
\end{array} 
$ }
\end{center}
\caption{Chern numbers after pushforwards to the base. The divisor $S$ appears in the definition of the $Q_7$-model,  $L=c_1(\mathscr{L})$, and  $c_i$ denotes the $i$th Chern class of the base of the fibration.}
 \label{Table:ChernNumbers3}
\end{table}

\begin{table}[htb!]
\begin{center}
\renewcommand{\arraystretch}{2}% for the vertical padding
\scalebox{.8}{$
\begin{array}{|c|c|c|c|c|c|}
\hline 
\text{Type} & c_1^4 (TY) & c_1^2 (TY) c_2(TY) & c_2^2 (TY) & c_1(TY) c_3(TY) & c_4(TY) \\\hline
\text{Q}_7 & 0 & 12 L (c_1-L)^2 & \begin{array} {c}  
2 (12 c_2 L - 12 c_1 L^2 + 22 L^3)\\
-2 S (5 L^2 - 8 L S + S^2)
\end{array} & \begin{array} {c} 
6 L (c_1-L) (2 c_1-6 L) \\
+6 S (c_1-L) (L - S)
\end{array} & \begin{array} {c} 
6 L (2 c_2 - 6 c_1 L+ 16 L^2) \\
+6 L S (c_1-6 L) \\
-6 S^2 (c_1-9 L) - 6 S^3
\end{array} \\
\hline 
\text{D}_5 & 0 & 12 L (c_1-L)^2 & \begin{array} {c}  
12 L (2 c_2 - 2 c_1 L + 3 L^2)
\end{array} & \begin{array} {c} 
4 L (c_1 - L)(3 c_1 - 7 L)
\end{array} & \begin{array} {c} 
4 L (3 c_2 - 7 c_1 L + 16 L^2)
\end{array} \\
\hline 
\text{E}_6 & 0 & 12 L (c_1-L)^2 & \begin{array} {c} 
24 L (c_2 - c_1 L + 2 L^2)
\end{array} & \begin{array} {c} 
12 L (c_1-L) (c_1-3 L)
\end{array} & \begin{array} {c} 
12 L (c_2 - 3 c_1 L + 9 L^2)
\end{array} \\
\hline 
\text{E}_7 & 0 & 12 L (c_1-L)^2 & \begin{array} {c}  
24 L (c_2 - c_1 L + 3 L^2)
\end{array} & \begin{array} {c} 
12 L (c_1-L) (c_1-4 L)
\end{array} & \begin{array} {c} 
12 L (c_2 - 4 c_1 L + 16 L^2)
\end{array} \\
\hline 
\text{E}_8 & 0 & 12 L (c_1-L)^2 & \begin{array} {c}  
24 L (c_2 - c_1 L + 6 L^2)
\end{array} & \begin{array} {c} 
12 L (c_1 - 6 L) (c_1 - L)
\end{array} & \begin{array} {c} 
12 L (c_2 - 6 c_1 L + 36 L^2)
\end{array} \\
\hline 
\end{array}
$}
\end{center}
\caption{Chern numbers of elliptically fibered fourfolds obtained from crepant resolutions of Tate's models. 
We abuse notation and omit the degree $\int$ in the entries of the table. 
The divisor $S$ appears in the definition of the $Q_7$-model,  $L=c_1(\mathscr{L})$, and  $c_i$ denotes the $i$th Chern class of the base of the fibration. 
}
\label{Table.ChernNumbers}
\end{table}

\begin{table}[htb!]
\begin{center}
\renewcommand{\arraystretch}{2}% for the vertical padding
\scalebox{.85}{$
\begin{array}{|c|c|c|c|}
\hline 
\text{Type} & \chi_0 & \chi_1 & \chi_2 \\\hline
\text{Q}_7 & \begin{array} {c} 
\frac{1}{12} L \left(c_1^2-3 c_1 L+c_2+2 L^2\right)
\end{array} & \begin{array} {c} 
-\frac{1}{3} L \left(2 c_1^2-27 c_1 L+5 c_2+55 L^2\right)\\
-\frac{1}{2} L S (3 c_1-13 L)\\
+\frac{1}{2} S^2 (3 c_1-19 L)+S^3
\end{array} & \begin{array} {c} 
-\frac{1}{2} L \left(3 c_1^2+35 c_1 L-17 c_2-118 L^2\right)\\
+L S (3 c_1-23 L)\\
 -3S^2  (c_1-35 L)-4S^3\\
\end{array} \\ 
\hline 
\text{D}_5 & \begin{array} {c} 
\frac{1}{12} L \left(c_1^2-3 c_1 L+c_2+2 L^2\right)
\end{array} & \begin{array} {c} 
-\frac{1}{3} L \left(2 c_1^2-21 c_1 L+5 c_2+37 L^2\right)
\end{array} & \begin{array} {c} 
\frac{1}{2} L \left(-3 c_1^2-27 c_1 L+17 c_2+78 L^2\right)
\end{array} \\ 
\hline 
\text{E}_6 & \begin{array} {c} 
\frac{1}{12} L \left(c_1^2-3 c_1 L+c_2+2 L^2\right)
\end{array} & \begin{array} {c} 
-\frac{1}{3} L \left(2 c_1^2-27 c_1 L+5 c_2+61 L^2\right)
\end{array} & \begin{array} {c} 
\frac{1}{2} L \left(-3 c_1^2-35 c_1 L+17 c_2+134 L^2\right)
\end{array} \\  
\hline 
\text{E}_7 & \begin{array} {c} 
\frac{1}{12} L \left(c_1^2-3 c_1 L+c_2+2 L^2\right)
\end{array} & \begin{array} {c} 
-\frac{1}{3} L \left(2 c_1^2-36 c_1 L+5 c_2+106 L^2\right)
\end{array} & \begin{array} {c} 
\frac{1}{2} L \left(-3 c_1^2-47 c_1 L+17 c_2+242 L^2\right)
\end{array} \\ 
\hline 
\text{E}_8 & \begin{array} {c} 
\frac{1}{12} L \left(c_1^2-3 c_1 L+c_2+2 L^2\right)
\end{array} & \begin{array} {c} 
-\frac{1}{3} L \left(2 c_1^2-54 c_1 L+5 c_2+232 L^2\right)
\end{array} & \begin{array} {c} 
\frac{1}{2} L \left(-3 c_1^2-71 c_1 L+17 c_2+554 L^2\right)
\end{array} \\ 
\hline 
\end{array}
$}
\end{center}
\caption{Holomorphic genera.  To ease the notation, we  omit the degree $\int$ in the entries of the table. 
The divisor $S$ appears in the definition of the $Q_7$-model,  $L=c_1(\mathscr{L})$, and  $c_i$ denotes the $i$th Chern class of the base of the fibration.
The holomorphic Euler characteristic $\chi_0(Y)$ is equal to $\chi_0(W,\mathscr{O}_W)$ where $W$ is the divisor defined by $\mathscr{L}$ in 
the base.}
\label{Table.HolomorphicEC}
\end{table}

\begin{table}[htb!]
\begin{center}
\renewcommand{\arraystretch}{2}% for the vertical padding
\scalebox{1}{$
\begin{array}{|c|l|l|}
\hline 
\text{Type}  & \hspace{1cm}\int_Y p_2(TY) & \hspace{1cm} \int_Y p_1^2(TY)  \\\hline
\text{Q}_7 & \begin{array}{c} 4 L \left(-6 c_1^2+12 c_2+41 L^2\right) \\ -14 S \left(5 L^2-8 L S+S^2\right) \end{array}
 & \begin{array}{c} 16 L (-3 c_1^2+6 c_2+8 L^2)\\ -8 S (5 L^2 - 8 L S + S^2) \end{array}
  \\
\hline 
\text{D}_5 & 12 L \left(-2 c_1^2+4 c_2+9 L^2\right)
 & 48 L \left(-c_1^2+2 c_2+2 L^2\right)
  \\
\hline 
\text{E}_6 & 24 L \left(-c_1^2+2 c_2+8 L^2\right)
 & 48 L \left(-c_1^2+2 c_2+3 L^2\right)
 \\
\hline 
\text{E}_7 & 24 L \left(-c_1^2+2 c_2+15 L^2\right)
 & 48 L \left(-c_1^2+2 c_2+5 L^2\right)
  \\
\hline 
\text{E}_8 & 24 L (-c_1^2+2 c_2+36 L^2)
 & 48 L (-c_1^2+2 c_2+11 L^2)
 \\
\hline 
\end{array}
$}
\end{center}
\caption{Pontryagin numbers. To ease the notation, we omit the degree $\int$ in the entries of the table. 
The divisor $S$ appears in the definition of the $Q_7$-model,  $L=c_1(\mathscr{L})$, and  $c_i$ denotes the $i$th Chern class of the base of the fibration. 
} 
\label{Table.Pontryagin}
\end{table}

\begin{table}[htb!]
\begin{center}
\renewcommand{\arraystretch}{1.6}% for the vertical padding
\scalebox{.9}{$
\begin{array}{|c|c|c|c|c|c|c|}
\hline 
\text{Type} & 192 X_8=\int_Y( p_1^2-4p_2) &45 \sigma=45\int_Y L_2= \int_Y(7p_2-p_1^2) & 5760\int_Y \hat{\text{A}}_2=\int_Y(7 p_1^2-4 p_2) \\\hline
\text{Q}_7 & \begin{array} {c} 
48 (c_1^2 L - 2 c_2 L - 11 L^3)\\
+48 S (5 L^2 - 8 L S + S^2)
\end{array} & \begin{array} {c} 
60 L (-2 c_1^2+4 c_2+17 L^2)\\
-90 S (5 L^2 - 8 L S + S^2)
\end{array} & \begin{array} {c} 
240 L \left(-c_1^2+2 c_2+L^2\right)
\end{array} \\
\hline 
\text{D}_5 & \begin{array} {c} 
48 L \left(c_1^2-2 c_2-7 L^2\right)
\end{array} & \begin{array} {c} 
60 L \left(-2 c_1^2+4 c_2+11 L^2\right)
\end{array} & \begin{array} {c} 
240 L \left(-c_1^2+2 c_2+L^2\right)
\end{array} \\
\hline 
\text{E}_6 & \begin{array} {c} 
48 L \left(c_1^2-2 c_2-13 L^2\right)
\end{array} & \begin{array} {c} 
120 L \left(-c_1^2+2 c_2+10 L^2\right)
\end{array} & \begin{array} {c} 
240 L \left(-c_1^2+2 c_2+L^2\right)
\end{array} \\
\hline 
\text{E}_7 & \begin{array} {c} 
48 L \left(c_1^2-2 c_2-25 L^2\right)
\end{array} & \begin{array} {c} 
120 L \left(-c_1^2+2 c_2+19 L^2\right)
\end{array} & \begin{array} {c} 
240 L \left(-c_1^2+2 c_2+L^2\right)
\end{array} \\
\hline 
\text{E}_8 & \begin{array} {c} 
48 L \left(c_1^2-2 c_2-61 L^2\right)
\end{array} & \begin{array} {c} 
120 L \left(-c_1^2+2 c_2+46 L^2\right)
\end{array} & \begin{array} {c} 
240 L \left(-c_1^2+2 c_2+L^2\right)
\end{array} \\
\hline 
\end{array}
$}
\end{center}
\caption{Characteristic invariants: the anomaly invariant $X_8$, the signature $\sigma$, and the index of the $\hat{\text{A}}$-genus. 
To ease the notation, we abuse notation and omit the degree $\int$ in the entries of the table. 
The divisor $S$ appears in the definition of the $Q_7$-model,  $L=c_1(\mathscr{L})$, and  $c_i$ denotes the $i$th Chern class of the base of the fibration.
}
\label{Table.Pontryagin2}
\end{table}
%%%% MW using Weierstrass models %%%%
\begin{table}[h!]
\centering
\scalebox{1}{$
\begin{array}{|c|c|c|}
	\hline
	\text{Group} & \text{Fiber Type} & \text{Crepant Resolution} \\\hline
 \text{SO}(3) & \text{I}_2^\text{ns},\   \text{III} & \begin{array}{c} \begin{tikzpicture}
	\node(X0) at (0,0){$X_0$};
	\node(X1) at (2.5,0){$X_1$};
	\draw[big arrow] (X1) -- node[above,midway]{$(x,y|e_1)$} (X0);		
 \end{tikzpicture}\end{array} \\\hline
 \text{PSU}(3) & \text{I}_3^{\text{s}} & \begin{array}{c} \begin{tikzpicture}
	\node(X0) at (0,0){$X_0$};
	\node(X1) at (2.5,0){$X_1$};
	\node(X2) at (5,0){$X_2$};
	\draw[big arrow] (X1) -- node[above,midway]{$(x,y,s|e_1)$} (X0);	
	\draw[big arrow] (X2) -- node[above,midway]{$(y,e_1|e_2)$} (X1);		
 \end{tikzpicture}\end{array}  \\\hline
 \text{SO}(4) & \begin{array}{c} \text{I}_2^{\text{ns}}+\text{I}_2^{\text{s}} \\ \text{I}_2^{\text{s}}+\text{I}_2^{\text{s}} \\ \text{III}+\text{I}_2^{\text{ns}} \\  \text{III}+\text{I}_2^{\text{s}} \\  \text{III}+\text{III} \end{array} & \begin{array}{c} \begin{tikzpicture}
	\node(X0) at (0,0){$X_0$};
	\node(X1) at (2.5,0){$X_1$};
	\node(X2) at (5,0){$X_2$};
	\draw[big arrow] (X1) -- node[above,midway]{$(x,y,s|e_1)$} (X0);	
	\draw[big arrow] (X2) -- node[above,midway]{$(x,y,t|w_1)$} (X1);		
 \end{tikzpicture}\end{array}  \\\hline
 \text{SO}(5) & \text{I}_4^{\text{ns}} & \begin{array}{c} \begin{tikzpicture}
	\node(X0) at (0,0){$X_0$};
	\node(X1) at (2.5,0){$X_1$};
	\node(X2) at (5,0){$X_2$};
	\draw[big arrow] (X1) -- node[above,midway]{$(x,y,s|e_1)$} (X0);	
	\draw[big arrow] (X2) -- node[above,midway]{$(x,y,e_1|e_2)$} (X1);		
 \end{tikzpicture}\end{array}  \\\hline
 \text{SO}(6) & \text{I}_4^\text{s} & \begin{array}{c} 
 \begin{tikzpicture}
	\node(X0) at (0,0){$X_0$};
	\node(X1) at (2.5,0){$X_1$};
	\node(X2) at (5,0){$X_2$};
	\node(X3) at (7.5,0){$X_3$};
	\draw[big arrow] (X1) -- node[above,midway]{$(x,y,s|e_1)$} (X0);	
	\draw[big arrow] (X2) -- node[above,midway]{$(y,e_1|e_2)$} (X1);
	\draw[big arrow] (X3) -- node[above,midway]{$(x,e_2|e_3)$} (X2);		
 \end{tikzpicture}\end{array} \\\hline
	\end{array}
$}
\caption{The blowup centers of the crepant resolutions. The variable $s$ and $t$ is a section of the line bundles $\mathscr{O}_B(S)$ and $\mathscr{O}_B(T)$.
The SO($4$)-model has reducible Kodaira fibers supported on smooth divisors of classes $T$ and $S=4L-T$. For the notation, see Section \ref{PSU(3)model}. }
\label{tab:blowupcenters}
\end{table}

\begin{table}[htb!]
\begin{center}
\renewcommand{\arraystretch}{1.8}% for the vertical padding
\scalebox{1}{ $
\begin{array}{|c|c|c|c|c|}
\hline 
\text{Algebra} & \text{Mordell--Weil} & \text{G} & \text{Kodaira} & \varphi_* \Big(c_3(TY)[Y]\Big) \\
\hline
\text{A}_1 & \mathbb{Z}_2 & \text{SO}(3) & \text{I}_2^{\text{ns}}, \text{III} & 12 L (c_1 - 4 L) \\
\hline
\text{A}_2 & \mathbb{Z}_3 & \text{PSU}(3) & \text{I}_3^{\text{s}} & 12 L (c_1 - 3 L) \\
\hline 
\text{A}_1+\text{A}_1 & \mathbb{Z}_2 & \text{SO}(4) & \begin{array}{c} \text{I}_2^{\text{ns}}+\text{I}_2^{\text{s}} \\ \text{I}_2^{\text{s}}+\text{I}_2^{\text{s}} \\ \text{III}+\text{I}_2^{\text{ns}} \\  \text{III}+\text{I}_2^{\text{s}} \\  \text{III}+\text{III} \end{array} & 12 L (c_1 - 4 L) + 16 L T - 4 T^2 \\
\hline 
\text{B}_2 & \mathbb{Z}_2 & \text{SO}(5) & \text{I}_4^{\text{ns}} & 4L (3 c_1 - 8 L) \\
\hline 
\text{A}_3 & \mathbb{Z}_2 & \text{SO}(6) & \text{I}_4^{\text{s}} & 12 L (c_1 - 2 L) \\

\hline 
\end{array} 
$ }
\end{center}
\caption{Chern numbers after pushforwards to the base. 
The SO($4$)-model has reducible Kodaira fibers supported on smooth divisors of classes $T$ and $S=4L-T$. 
By definition, $L=c_1(\mathscr{L})$ and  $c_i$ denotes the $i$th Chern class of the base of the fibration.}
 \label{Table:ChernNumbersG}
\end{table}

\begin{table}[htb!]
\begin{center}
\renewcommand{\arraystretch}{2}% for the vertical padding
\scalebox{.8}{$
\begin{array}{|c|c|c|c|c|c|}
\hline 
\text{Type} & c_1^4 (TY) & c_1^2 (TY) c_2(TY) & c_2^2 (TY) & c_1(TY) c_3(TY) & c_4(TY) \\\hline
\text{SO}(3) & 0 & 12 L (c_1-L)^2 & \begin{array} {c}  
24 L (c_2 -c_1 L + 3 L^2)
\end{array} & \begin{array} {c} 
12 L (c_1 - L) (c_1 - 4 L)
\end{array} & \begin{array} {c} 
12 L (c_2 -4 c_1 L + 16 L^2)
\end{array} \\
\hline 
\text{PSU}(3) & 0 & 12 L (c_1-L)^2 & \begin{array} {c}  
24 L (c_2 -c_1 L + 2 L^2)
\end{array} & \begin{array} {c} 
12 L (c_1 - 3 L) (c_1 - L)
\end{array} & \begin{array} {c} 
12 L (c_2 - 3 c_1 L + 9 L^2)
\end{array} \\
\hline 
\text{SO}(4) & 0 & 12 L (c_1-L)^2 & \begin{array} {c}  
24 L (c_2 - c_1 L + 3 L^2)\\
- 32 L^2 T + 8 L T^2
\end{array} & \begin{array} {c} 
12 L (c_1 - L) (c_1 - 4 L)\\
+16 T (c_1 - L) (L - 4 T)
\end{array} & \begin{array} {c} 
12 L (c_2 - 4 c_1 L + 16 L^2)\\
-4T (c_1 - 7 L) (T - 4L)
\end{array} \\
\hline 
\text{SO}(5) & 0 & 12 L (c_1-L)^2 & \begin{array} {c}  
8 L (3 c_2 -3 c_1L + 5 L^2)
\end{array} & \begin{array} {c} 
4 L (c_1 - L) (3 c_1 - 8 L)
\end{array} & \begin{array} {c} 
4 L (3 c_2 -8 c_1 L + 20 L^2)
\end{array} \\
\hline 
\text{SO}(6) & 0 & 12 L (c_1-L)^2 & \begin{array} {c}
8 L (3 c_2+4 L^2-3 c_1 L)
\end{array} & \begin{array} {c} 
12 L (c_1-L) (c_1-2 L)
\end{array} & \begin{array} {c} 
12 L (c_2-2 c_1+4 L^2)
\end{array} \\
\hline 
\end{array}
$}
\end{center}
\caption{Chern numbers of elliptically fibered fourfolds obtained from crepant resolutions of Tate's models. 
We abuse notation and omit the degree $\int$ in the entries of the table. 
The SO($4$)-model has reducible Kodaira fibers supported on smooth divisors of classes $T$ and $S=4L-T$. 
By definition, $L=c_1(\mathscr{L})$ and  $c_i$ denotes the $i$th Chern class of the base of the fibration.}
\label{Table.ChernNumbersG}
\end{table}

\begin{table}[htb!]
\begin{center}
\renewcommand{\arraystretch}{2}% for the vertical padding
\scalebox{.85}{$
\begin{array}{|c|c|c|c|}
\hline 
\text{Type} & \chi_0 & \chi_1 & \chi_2 \\\hline
\text{SO}(3) & \begin{array} {c} 
\frac{1}{12} L (c_1^2 + c_2 - 3 c_1 L + 2 L^2)
\end{array} & \begin{array} {c} 
-\frac{1}{3} L (2 c_1^2 + 5 c_2 - 36 c_1 L + 106 L^2)
\end{array} & \begin{array} {c} 
\frac{1}{2} L (-3 c_1^2 + 17 c_2 - 47 c_1 L + 242 L^2)
\end{array} \\ 
\hline 
\text{PSU}(3) & \begin{array} {c} 
\frac{1}{12} L (c_1^2 + c_2 - 3 c_1 L + 2 L^2)
\end{array} & \begin{array} {c} 
-\frac{1}{3} L (2 c_1^2 + 5 c_2 - 27 c_1 L + 61 L^2)
\end{array} & \begin{array} {c} 
\frac{1}{2} L (-3 c_1^2 + 17 c_2 - 35 c_1 L + 134 L^2)
\end{array} \\ 
\hline 
\text{SO}(4) & \begin{array} {c} 
\frac{1}{12} L (c_1^2 + c_2 - 3 c_1 L + 2 L^2)
\end{array} & \begin{array} {c} 
-\frac{1}{3} L (2 c_1^2 + 5 c_2 - 36 c_1 L + 106 L^2) \\
-T (c_1 - 5 L) (4L-T)
\end{array} & \begin{array} {c} 
-\frac{1}{2} L (3 c_1^2 - 17 c_2 + 47 c_1 L - 242 L^2)\\
+2 T (c_1 - 9 L) (4 L - T)
\end{array} \\ 
\hline 
\text{SO}(5) & \begin{array} {c} 
\frac{1}{12} L \left(c_1^2-3 c_1 L+c_2+2 L^2\right)
\end{array} & \begin{array} {c} 
-\frac{1}{3} L \left(2 c_1^2-24 c_1 L+5 c_2+46 L^2\right)
\end{array} & \begin{array} {c} 
-\frac{1}{2} L \left(3 c_1^2+31 c_1 L-17 c_2-98 L^2\right)
\end{array} \\ 
\hline 
\text{SO}(6) & \begin{array} {c} 
\frac{1}{12} L \left(c_1^2-3 c_1 L+c_2+2 L^2\right)
\end{array} & \begin{array} {c} 
-\frac{1}{3} L \left(2 c_1^2-18 c_1 L+5 c_2+28 L^2\right)
\end{array} & \begin{array} {c} 
-\frac{1}{2} L \left(3 c_1^2+23 c_1 L-17 c_2-58 L^2\right)
\end{array} \\ 
\hline 
\end{array}
$}
\end{center}
\caption{Holomorphic genera.  To ease the notation, we  omit the degree $\int$ in the entries of the table. 
The SO($4$)-model has reducible Kodaira fibers supported on smooth divisors of classes $T$ and $S=4L-T$.
By definition, $L=c_1(\mathscr{L})$ and  $c_i$ denotes the $i$th Chern class of the base of the fibration. The holomorphic Euler characteristic $\chi_0(Y)$ is equal to $\chi_0(W,\mathscr{O}_W)$ where $W$ is the divisor defined by $\mathscr{L}$ in 
the base.}
\label{Table.HolomorphicECG}
\end{table}
\clearpage 

\begin{table}[htb!]
\begin{center}
\renewcommand{\arraystretch}{2}% for the vertical padding
\scalebox{1}{$
\begin{array}{|c|c|c|}
\hline 
\text{Type}  & \hspace{1cm}\int_Y p_2(TY) & \hspace{1cm} \int_Y p_1^2(TY)  \\\hline
\text{SO}(3) & \begin{array}{c} 
24L (-c_1^2 + 2 c_2 + 15 L^2)
\end{array} & \begin{array}{c}
48L (-c_1^2 +2 c_2 +5 L^2)
\end{array} \\
\hline 
\text{PSU}(3) & \begin{array}{c} 
24L (-c_1^2 + 2 c_2 + 8 L^2)
\end{array} & \begin{array}{c}
48L (-c_1^2 + 2 c_2 + 3 L^2)
\end{array} \\
\hline 
\text{SO}(4) & \begin{array}{c} 
24L (-c_1^2 + 2 c_2 + 15 L^2)-56 L T (4 L - T)
\end{array} & \begin{array}{c}
48L (-c_1^2 +2 c_2 +5 L^2)-32 L T (4 L - T)
\end{array} \\
\hline 
\text{SO}(5) & \begin{array}{c} 
8L \left(-3 c_1^2+6 c_2+17 L^2\right)
\end{array} & \begin{array}{c}
16L \left(-3 c_1^2+6 c_2+7 L^2\right)
\end{array} \\
\hline 
\text{SO}(6) & \begin{array}{c} 
8L \left(-3 c_1^2+6 c_2+10 L^2\right)
\end{array} & \begin{array}{c}
16 L \left(-3 c_1^2+6 c_2+5 L^2\right)
\end{array} \\
\hline 
\end{array}
$}
\end{center}
\caption{Pontryagin numbers. To ease the notation, we omit the degree $\int$ in the entries of the table. 
The SO($4$)-model has reducible Kodaira fibers supported on smooth divisors of classes $T$ and $S=4L-T$.
By definition, $L=c_1(\mathscr{L})$ and  $c_i$ denotes the $i$th Chern class of the base of the fibration.} 
\label{Table.PontryaginG}
\end{table}

\begin{table}[hb!]
\begin{center}
\renewcommand{\arraystretch}{1.6}% for the vertical padding
\scalebox{1}{$
\begin{array}{|c|c|c|c|}
\hline 
\text{Type} & 192 X_8=\int_Y( p_1^2-4p_2) & 45 \sigma=45\int_Y L_2= \int_Y(7p_2-p_1^2) & 5760\int_Y \hat{\text{A}}_2=\int_Y(7 p_1^2-4 p_2) \\\hline
\text{SO}(3) & \begin{array} {c} 
48L (c_1^2 - 2 c_2 - 25 L^2)
\end{array} & \begin{array} {c} 
120L (-c_1^2 + 2 c_2 + 19 L^2)
\end{array} & \begin{array} {c} 
240L (-c_1^2 + 2 c_2 + L^2)
\end{array} \\
\hline 
\text{PSU}(3) & \begin{array} {c} 
48L (c_1^2 - 2 c_2 - 13 L^2)
\end{array} & \begin{array} {c} 
120L (-c_1^2 + 2 c_2 + 10 L^2)
\end{array} & \begin{array} {c} 
240L (-c_1^2 + 2 c_2 + L^2)
\end{array} \\
\hline 
\text{SO}(4) & \begin{array} {c} 
48L (c_1^2 - 2 c_2 - 25 L^2)\\
 +192LT (4 L- T)
\end{array} & \begin{array} {c} 
120L (-c_1^2 + 2 c_2 + 19 L^2)\\
 -360LT(4 L- T)
\end{array} & \begin{array} {c} 
240L (-c_1^2 + 2 c_2 + L^2)
\end{array} \\
\hline 
\text{SO}(5) & \begin{array} {c} 
-48L \left(-c_1^2+2 c_2+9 L^2\right)
\end{array} & \begin{array} {c} 
120L \left(-c_1^2+2 c_2+7 L^2\right)
\end{array} & \begin{array} {c} 
240L (-c_1^2 + 2 c_2 + L^2)
\end{array} \\
\hline 
\text{SO}(6) & \begin{array} {c} 
-48L \left(-c_1^2+2 c_2+5 L^2\right)
\end{array} & \begin{array} {c} 
120L \left(-c_1^2+2 c_2+4 L^2\right)
\end{array} & \begin{array} {c} 
240L (-c_1^2 + 2 c_2 + L^2)
\end{array} \\
\hline 
\end{array}
$}
\end{center}
\caption{Characteristic invariants: the anomaly invariant $X_8$, the signature $\sigma$, and the index of the $\hat{\text{A}}$-genus. 
To ease the notation, we abuse notation and omit the degree $\int$ in the entries of the table. 
The SO($4$)-model has reducible Kodaira fibers supported on smooth divisors of classes $T$ and $S=4L-T$. By definition, $L=c_1(\mathscr{L})$ and  $c_i$ denotes the $i$th Chern class of the base of the fibration.}
\label{Table.Pontryagin2G}
\end{table}

\section*{Acknowledgments}
M.J.K. would like to thank Simons workshop 2018, Strings 2018, and String-math 2018 for their hospitality. 
M.E. is supported in part by the National Science Foundation (NSF) grant DMS-1701635  ``Elliptic Fibrations and String Theory''.
M.J.K. would like to acknowledge a partial support from NSF grant PHY-1352084.

\end{document}